\documentclass[12pt,draftcls, onecolumn]{IEEEtran}
\usepackage{amsmath,amssymb,amsfonts,mathrsfs}
\usepackage{epsfig}
\usepackage{graphicx,graphics}
\usepackage{color}
\usepackage{hyperref}

\newtheorem{theorem}{Theorem}

\newtheorem{definition}{Definition}
\newtheorem{lemma}{Lemma}
\newtheorem{proposition}{Proposition}
\newtheorem{remark}{Remark}

\begin{document}

\title{On the response of quantum linear systems to single photon input fields}

\author{Guofeng~Zhang\thanks{%
G.~Zhang is with the Department of Applied Mathematics, The Hong Kong Polytechnic University, Hong Kong. He was with the Research School of Engineering, The Australian National University, Canberra, ACT 0200, Australia.  (e-mail: Guofeng.Zhang@polyu.edu.hk).} ~~ Matthew~R.~James\thanks{M.~James is with the ARC Centre for Quantum Computation and Communication Technology, Research  School of Engineering, Australian National University, Canberra, ACT 0200, Australia. (e-mail: Matthew.James@anu.edu.au).}
}
\maketitle

\begin{abstract}

The purpose of this paper is to extend linear systems and signals theory to include single photon quantum signals. We provide detailed results describing how quantum linear systems respond to multichannel single photon  quantum signals. In particular,  we characterize the class of states (which we call {\em photon-Gaussian} states) that result when multichannel photons are input to a quantum linear system. We show that this class of quantum states is preserved by quantum linear systems. Multichannel photon-Gaussian states are defined via the action of certain creation and annihilation operators on Gaussian states.  Our results show how the output states are determined from the input states through a pair of transfer function relations. We also provide  equations from which output signal intensities can be computed. Examples from quantum optics are provided to illustrate the results.

\textbf{Index Terms---} Quantum linear systems; continuous mode single photon states; Gaussian states.
\end{abstract}

\tableofcontents

\section{Introduction}
\label{sec:introduction}

One of the most basic aspects of systems and control theory is the study of how systems respond to input signals.  Well known tools, including transfer functions and impulse response functions, allow engineers to determine  the output signal produced by a linear system in response to a given input signal.  Such knowledge is needed for engineers to enable them to analyze and design control systems.

As is well known, signals other than deterministic signals are important to a wide range of applications. There is a well-developed theory of dynamical systems and non-deterministic signals, within which  Gaussian signals play an important role. Indeed,
linear systems (with Gaussian states) driven by Gaussian  input signals  provide the foundations for Kalman filtering and linear quadratic Gaussian (LQG) control, as well as many other developments and applications, \cite{AM71,KS72,MHAD77,AM79,KV86}.
It is well known that for a classical linear system $G$ initialized in a Gaussian state and driven by Gaussian white noise process $w(t)$ satisfying $\mathbb{E}[w(t)]=0$ and $\mathbb{E}[w(t)w(t^\prime)^T]=\delta(t-t^\prime)$,
\begin{eqnarray}
 \dot{x}(t) &=& Ax(t)+Bw(t) , \ \ x(t_0) = x,
 \label{eq:lcss} \\
  y(t) &=& Cx(t) + D w(t), \nonumber
\end{eqnarray}
where $\mathbb{E}$ denote mathematical expectation, the mean $\bar x(t)=\mathbb{E}[x(t)]$ and covariance  $\Sigma (t) = \mathbb{E} \left[(x(t)-\bar{x}(t))(x(t)-\bar{x}(t))^T\right]$ satisfy the differential equations
\begin{eqnarray}
\dot{\bar{x}} (t) &=& A  \bar{x} (t)  ,
\label{eq:x-bar}
 \\
  \dot{\Sigma} (t) &=&  A \Sigma (t) + \Sigma (t) A^T + BB^T .
  \label{eq:Sigma}
\end{eqnarray}
These equations characterize the dynamical evolution of the Gaussian distributions of the state variables $x(t)$. Expected values of quadratic forms can easily be evaluated, for instance in the zero mean case  $\mathbb{E} [x^T(t) M x(t) ] = \mathrm{Tr}  [ M^T \Sigma(t)]  $. In the frequency domain, the spectral density $R_{out}[i\omega]$ of the output process $y(t)$ is related to the  spectral density $R_{in}[i\omega]$ of the input process $w(t)$ via the transfer relation
\begin{equation}
R_{out}[i\omega] =  \Xi_G[ i\omega] R_{in}[i\omega]  \Xi_G[i\omega]^\dag  ,
 \label{eq:transfer-intro-1}
\end{equation}
where $\Xi_G$ is the transfer function for the system $G$. Differential equations of the types (\ref{eq:x-bar}) and (\ref{eq:Sigma}), and transfer relations like (\ref{eq:transfer-intro-1}),  play fundamental roles in classical linear systems and signals theory, \cite{AM71,KS72,MHAD77,AM79,KV86}.

{\em Quantum linear systems}  are a class  of open quantum systems fundamental to quantum optics and quantum technology, \cite{BR04,KMN07,Milburn08,WM08,WM10,MNM10}. The equations describing quantum linear  systems (see Section \ref{sec:lqss} below) look formally like the classical equations (\ref{eq:lcss}), but they are not classical equations, and in fact give the Heisenberg dynamics of a system of coupled open quantum harmonic oscillators, \cite{WM10,JNP08}. These quantum systems are driven by quantum fields that describe the influence of the external environment (e.g. light beams) on the oscillators. In quantum optics the fields play the role of {\em quantum signals}. As pointed out in the paper \cite{MNM10}, a large fraction of quantum optics literature concerns fields in coherent states (a type of Gaussian state), and as a consequence quantum optical systems driven by coherent fields are well understood. Indeed, when a quantum linear system, initialized in a Gaussian state, is driven by a Gaussian field, the state of the system is Gaussian, with mean and covariance satisfying equations of the form  (\ref{eq:x-bar}) and (\ref{eq:Sigma}). This fact is of basic importance to Gaussian quantum systems and signals theory, and for example has been exploited for the purpose of quantum $H^\infty$ and LQG control design, \cite{DJ99,WD05,JNP08,NJP09,WM10,ZJ11}.

In recent years, due to their highly non-classical properties,  single-photon light fields have found important applications in quantum communication, quantum computation, quantum cryptography,  and quantum metrology, etc.,  \cite{KMN07,CMR09,MNM10}. Unlike Gaussian states,  a light pulse in single-photon state contains one and only one photon,  and is thus highly non-classical.
While much of the optical  quantum information literature deals with single modes (namely discrete variables like polarization) of light and static devices (like beamsplitters), the importance of continuous mode photons and dynamical devices is becoming clear, \cite{RR05,Milburn08,MNM10}.  However, single photon states  of light are not Gaussian, and so the relatively well developed  quantum Gaussian systems and signals theory  is not directly useful for quantum optical systems driven by single photon fields.

The purpose of this paper is to extend linear systems and signals theory to include single photon quantum signals.
We build on the results in \cite{Milburn08}  to describe how quantum linear dynamical systems respond to multichannel continuous mode photon fields from a system-theoretic point of view. We show, for example, that
when a quantum linear system $G$ with no scattering (equations (\ref{system-a})-(\ref{system-out}) below with $S=I$) is driven by multichannel photon fields, the mean $\bar {\breve a}(t) = \mathrm{Tr} [\rho\breve a(t)]$ and covariance $\Sigma(t) = \mathrm{Tr}[ \rho( \breve a(t) - \bar{\breve{a}}(t) ) (\breve a (t) -  \bar{\breve{a}}(t))^\dag]$, where $\rho$ is the initial joint system-field density operator, satisfy the differential equations
\begin{eqnarray}
\dot{\bar{\breve{a}}} (t) &=& A  \bar{\breve{a}} (t)  ,
\label{eq:x-bar-q}
 \\
  \dot{\Sigma} (t) &=&  A \Sigma (t) + \Sigma (t) A^\dag +  B\Gamma^\dag(\xi(t)) + \Gamma(\xi(t)) B^\dag + BFB^\dag .
  \label{eq:Sigma-q}
\end{eqnarray}
In equation (\ref{eq:Sigma-q}), $F$ is a matrix depending on the Ito products of the input fields \cite{KRP92,GZ00}, and $\Gamma(\xi(t))$ is a matrix depending on the pulse shape matrix $\xi(t)$. Equation (\ref{eq:Sigma-q}) is crucial to the study of intensity of output fields, cf. Sec.~\ref{sec:photon}.

When multichannel photons are input to a quantum linear system, the output state can be quite complex. To accommodate the types of states that can be produced from multichannel photon inputs, we define a class $\mathcal{F}$ of quantum states, which we call {\em photon-Gaussian}  states. A state (density operator) $\rho \in \mathcal{F}$ is specified by a matrix $\xi(t)$ of functions (or pulses, a multichannel generalization of wavepackets), and a Gaussian spectral density  $R[i\omega]$. We  sometimes express these states as $\rho_{\xi, R}$, as in Fig.~\ref{main_figure}. Our main result, Theorem \ref{thm:main}, states that if the photon-Gaussian state  $\rho_{\xi_{in}, R_{in}} \in \mathcal{F}$ is input to a quantum linear system $G$ initialized in the vacuum state, then the steady-state output state is also a photon-Gaussian state  $\rho_{\xi_{out}, R_{out}} \in \mathcal{F}$. Moreover, the transfer $R_{in} \mapsto R_{out}$ is given by the above relation (\ref{eq:transfer-intro-1}), while the transfer $\xi_{in} \mapsto \xi_{out}$ is given by
\begin{eqnarray}
 \xi_{out}[s]  =   \Xi_{G}[s] \xi_{in}[s]  ,
 \label{eq:transfer-intro-2}
\end{eqnarray}
where $\Xi_G$ is the transfer function of the quantum linear system $G$ defined in Sec.~\ref{sec:systems}. This result provides a natural generalization of the well-known Gaussian transfer properties of classical linear systems to an important class of highly non-classical quantum states that includes single photon states. Results of this type are anticipated to be of fundamental importance to the analysis and design of quantum systems for the  processing of   highly non-classical quantum states.
While most of this paper is concerned with questions of analysis, we include a short section on synthesis, generalizing the work \cite{Milburn08}. Here, a quantum linear system is designed to manipulate the wavepacket shape of a single photon. This is an example of {\em coherent control}, \cite{Milburn08}.

\begin{figure}
\centering
\includegraphics[width=2.5in]{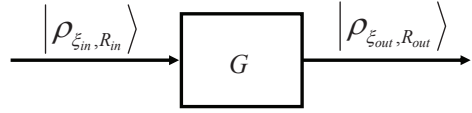}\\
\caption{A quantum linear stochastic system $G$ driven by multichannel field in a photon-Gaussian state (Definition \ref{def:F}). The multichannel input and output field states are denoted $\rho_{\xi_{in},R_{in}}$  and $\rho_{\xi_{out},R_{out}}$  respectively. The system $G$ transfers the input pulse shape $\xi_{in}$  and covariance function $R_{in}$ to the output pulse shape $\xi_{out}$  and covariance function $R_{out}$ respectively. In this paper it is assumed that the system is initially isolated from the input field.}
\label{main_figure}
\end{figure}

{\em Notation}. Given a column vector of complex numbers or   operators $x=[
\begin{array}{ccc}
x_{1} & \cdots & x_{k}%
\end{array}
]^{T}$ where $k$ is a positive integer, define $x^{\#}=[
\begin{array}{ccc}
x_{1}^{\ast} & \cdots & x_{k}^{\ast}%
\end{array}
]^{T}$, where the asterisk $\ast$ indicates complex conjugation or Hilbert space adjoint. Denote $x^\dag = (x^\#)^T$. Furthermore, define the doubled-up column vector to be $\breve {%
x}=[
\begin{array}{cc}
x^{T} & \left( x^{\#}\right) ^{T}%
\end{array}
]^{T}$. The matrix case can be defined analogously. Let $I_{k}$ be an identity matrix and $0_k$ a zero square matrix, both of dimension $k$. Define $J_{k}=\mathrm{diag}%
(I_{k},-I_{k})$ and $\Theta_k =  [\begin{array}{cccc}
0 & I_k; & -I_k & 0%
\end{array}]$ (The
subscript ``$k$'' is omitted when it causes no confusion.) Then for a matrix $X\in\mathbb{C}^{2j\times 2k}$, define $X^{\flat}=J_{k}X^{\dag}J_{j}$. $\otimes_{c}$ denotes the Kronecker product. $m$ is the number of input channels, and $n$ is the number of degrees of freedom of a given quantum linear stochastic system (that is, the number of oscillators). $\vert\phi\rangle$ denotes the initial state of the system which is always assumed to be vacuum, $\vert 0\rangle$ denotes the vacuum state of free fields. Given a function $f(t)$ in the time domain, define its two-sided Laplace transform  \cite[Chapter 10]{WRL61} to be
$F[s] = \mathscr{L}_b \{f(t)\}(s)  := \int_{-\infty}^\infty e^{-st} f(t) dt $.
When $s=i\omega$, we have the Fourier transform
$ F[i\omega] := \int_{-\infty}^\infty e^{-i\omega t} f(t) dt $.
Given two constant matrices $U$, $V\in \mathbb{C}^{r\times k}$, a doubled-up matrix $\Delta\left(U,V\right) $ is defined as
\begin{equation}
\Delta\left(U,V\right):=\left[
\begin{array}{cc}
U & V \\
V^{\#} & U^{\#}%
\end{array}
\right] .
\end{equation}
Similarly, given time-domain matrix functions $E^-(t)$ and $E^+(t)$ of compatible dimensions, define a doubled-up matrix function
\begin{equation}
\Delta(E^-(t),E^+(t)) := \left[ \begin{array}{cc}
                                 E^-(t) & E^+(t) \\
                                 E^+(t)^\# & E^-(t)^\#
                               \end{array}
  \right].
\end{equation}
Then its two-sided Laplace transform is
\begin{equation}
\Delta (E^-[s], E^+[s]) = \mathscr{L}_b \{\Delta(E^-(t),E^+(t))\}(s) = \left[ \begin{array}{cc}
E^-[s] & E^+[s] \\
E^+[s^\ast]^\# & E^-[s^\ast]^\#
\end{array}\right] .
\end{equation}
Finally, given two operators $A$ and $B$, their commutator is defined to be $[A,B]=AB-BA$.

\section{Quantum Linear Systems} \label{sec:systems}

\begin{figure}
\centering
\includegraphics[width=2.0in]{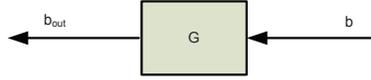}
\caption{A quantum linear stochastic system $G$ with input $b$ and output $b_{out}$}
\label{system}
\end{figure}

In this section quantum signals and systems of interest are introduced, Fig.~\ref{system}. Quantum systems behave in accordance with the laws of quantum mechanics, and in Sec.~\ref{sec:lqss} we summarize the  dynamics  of a quantum linear system driven by external quantum fields. These models feature inputs and outputs, corresponding, for example, to light incident on and reflected by the system, \cite{GZ00,WM10}. In Sec.~\ref{sec:i_o} we write down explicitly the input-output relations, using a notation for the impulse response motivated by physical annihilation and creation processes. Since one of our objectives is to study the steady-state response of the quantum linear system, we present in Sec.~\ref{sec:i_o_stationary} the steady-state versions of the input-output relations, as well as the transfer function defined in terms of the two-sided Laplace transform. Our analysis will require the stable inversion of the transfer function, and this is presented in Sec.~\ref{sec:identity}. Sections \ref{sec:states-mode}, \ref{sec:states-gaussian-field} and \ref{sec:states-photon-field} provide the definitions of Gaussian and photon states needed in this paper. Sec.~\ref{sec:covariance-transfer} then describes how output means and covariances are determined from the corresponding input quantities.

\subsection{Dynamics}\label{sec:lqss}

The open quantum linear system $G$, shown in Fig.~\ref{system}, is a collection of $n $ interacting quantum harmonic oscillators $a=[a_{1},\ldots
,a_{n}]^{T}$ (defined on a Hilbert space $\mathfrak{H}_G$)  coupled to $m$ boson fields $b(t)=[b_1(t), \ldots, b_m(t)]^{T}$ (defined on a Fock space $\mathfrak{F}$)
\cite{WM08,WM10,ZJ11}. Here, $a_{j}$ ($j=1,\ldots,n$) is the annihilation operator of the $j$th quantum harmonic oscillator satisfying the canonical commutation relations $[a_{j},a_{k}^{\ast}]=\delta_{jk}$. The vector $b(t)$ represents an $m$-channel electromagnetic field in free space, which satisfies the singular commutation relations
\begin{equation} \label{eq:ccr}
\lbrack b_j(t),b_k^{\ast }(t^{\prime })]=\delta _{jk}\delta
(t-t^{\prime }),~ [b_j(t), b_k(t^{\prime })]=[b_j^{\ast
}(t),b_k^{\ast }(t^{\prime })]=0, ~ j,k=1,\ldots, m, ~ \forall t,t^{\prime} \in \mathbb{R} .
\end{equation}%
The operator $b_j(t)$ ($j=1,\ldots,m$) may be regarded as a quantum stochastic process;
in the case where the field is in the vacuum state (denoted $\vert 0 \rangle$, \cite{KRP92,GZ00,WM10}), this process is quantum white noise. The integrated field operators are given by  $B(t)=\int_{t_0}^t b(r) dr$ and $B^\# (t)=\int_{t_0}^t b^\#(r) dr$, which are quantum Wiener processes.
The gauge process is given by
$\Lambda(t) = \int_{t_0}^t b^\#(\tau)b^T(\tau)d\tau = \left(\Lambda_{jk}(t)\right)_{j,k=1,\ldots,m}$
with operator entries $\Lambda_{jk}$ on the Fock space $\mathcal{F}$. Finally in this paper it is assumed that these quantum stochastic processes are \emph{canonical}, that is, they have the following non-zero Ito products
\begin{align}
 dB_j(t) dB_k^\ast(t) =& ~ \delta_{jk} dt, ~ d\Lambda_{jk}dB_l^\ast(t) = \delta_{kl}dB_j^\ast(t),  \label{Eq:CCR2}\\
 dB_j(t)d\Lambda_{kl}(t) =& ~ \delta_{jk}dB_l(t), ~ d\Lambda_{jk}(t)d\Lambda_{lm}(t) = \delta_{kl}d\Lambda_{jm}(t), ~ (j,k,l=1, \ldots, m) . \nonumber
\end{align}

The system $G$ can be parameterized by a triple $\left(
S_{-},L,H\right) $. In this triple, $S_{-}$ is a scattering matrix
(satisfying $S_{-}S_{-}^{\dag}=S_{-}^{\dag}S_{-}=I_{m}$). The vector operator $L$ is
defined as
$L=C_{-}a+C_{+}a^{\#}$,
where $C_{-}$ and $C_{+}\in\mathbb{C}^{m\times n}$. The Hamiltonian
$H=\frac{1}{2}\breve{a}^\dag\Delta\left(\Omega_{-},\Omega_{+}\right) \breve{a}$
describes  the initial internal energy of the oscillators,
where $\Omega_{-}, \Omega_{+}\in C^{n\times n}$ satisfy $\Omega
_{-}=\Omega_{-}^{\dag}$ and $\Omega_{+}=\Omega_{+}^{T}$.  With these parameters, the Schrodinger's equation for the system $G$ (with initial internal energy $H$) and boson field is, in Ito form (\cite[Chapter 11]{GZ00}),
\begin{equation} \label{eq:U}
dU(t,t_0) = \left\{ \mathrm{Tr}[(S_- - I_m) d\Lambda^T] + dB^\dag(t)L - L^\dag S_- dB(t) - (\frac{1}{2} L^\dag L + iH )dt \right\} U(t,t_0),
\ \ t \geq t_0,
\end{equation}
with $U(t_0,t_0)=I$ (the identity operator) for all $t \leq t_0$. Moreover, by means of Ito rules, it can be shown that the operator $U(t,t_0)$ satisfies  $d(U(t,t_0)U^\ast(t,t_0)) = d(U^\ast(t,t_0)U(t,t_0))= 0, \ \forall t\geq t_0$, and since $U(t_0,t_0)=I$ we see that $U(t,t_0)$ is unitary.

In the Heisenberg picture, the system operators evolve according to $\breve{a}(t) = U(t,t_0)^\ast \breve{a} U(t,t_0)$ (componentwise on the components of $\breve a$). The output field $\breve b_{out}(t)$ (which carries away information about the system after interaction) is defined by $\breve{b}_{out}(t) = U(t,t_0)^\ast \breve{b}(t) U(t,t_0)$ (componentwise on the components of $\breve{b}(t)$). Consequently, by Eq. (\ref{eq:U}), the dynamical model for the system $G$  can be written as
\begin{align}
 \dot{\breve{a}}(t) & =A\breve{a}(t) +BS\breve{b}\left( t\right) , \ \ \breve{a}(t_0) = \breve{a},   \label{system-a} \\
 \breve{b}_{out}\left( t\right) & =C\breve{a}(t)+S\breve{b}\left(
t\right) ,  \label{system-out}
\end{align}
in which system matrices are given in terms of  physical parameters by
$S=\Delta\left( S_{-},0\right), C=\Delta\left( C_{-},C_{+}\right)
, B=-C^{\flat}, A=-\frac{1}{2}C^{\flat}C-iJ_{n}\Delta\left(\Omega_{-},\Omega_{+}\right) $.
\begin{remark}{\rm Eqs. (\ref{system-a}) and (\ref{system-out}) are quantum linear systems, which can be obtained using the definitions $\breve{a}(t) = U^\ast(t,t_0)\breve{a}(t_0)U(t,t_0)$ and $\breve{b}_{out}(t) = U^\ast(t,t_0)\breve{b}(t)U(t,t_0)$ and making use of Ito rules and the  commutation relations for annihilation and creation operators, \cite{WM10}.}
\end{remark}

The system $G$ is said to be {\it asymptotically stable} (equivalently, \emph{exponentially stable}) if the matrix $A$ is Hurwitz \cite[Sec.~III-A]{ZJ11}.  More details on quantum linear stochastic systems can be found in, e.g., \cite{GZ00}, \cite{WM10}, \cite{ZJ11}, \cite{ZJ12} and references therein.

\begin{remark}{\rm
Quantum linear systems have been widely used in quantum optics. Moreover, They have also been used in opto-mechanical systems, e.g., Eqs. (15)-(18) in the Supplementary Information of Reference \cite{MHP+11} and Eqs. (9)-(10) and the line below Eq. (10) in Reference \cite{T12}.  They also appear in circuit quantum electrodynamics (circuit QED) systems, e.g., Eqs. (18)-(21) in Reference \cite{MJP+11}.
}
\end{remark}

\subsection{Input-output relations}\label{sec:i_o}

The output may be expressed in terms of the input and initial system variables
\begin{equation}\label{eq:out_tf_0}
\breve{b}_{out}(t) = Ce^{A(t-t_0)}\breve{a} +  \int_{t_0}^{t} C e^{A(t-r)} BS
\breve{b}(r)dr + S \breve{b}(t),  \ \ t\geq t_0.
\end{equation}
We find it convenient to express this input-output relation in terms of impulse response functions. Define
\begin{eqnarray}
g_{G^{-}}(t)&:=&\left\{
\begin{array}{ll}
\delta(t)S_{-}-[%
\begin{array}{cc}
C_{-} & C_{+}%
\end{array}
]e^{At}\left[
\begin{array}{c}
C_{-}^{\dag} \\
-C_{+}^{\dag}%
\end{array}
\right] S_{-}, & t\geq 0   \\
0, & t<0 %
\end{array}
\right., \nonumber \\
\ g_{G^{+}}(t)&:=&\left\{
\begin{array}{ll}
-[%
\begin{array}{cc}
C_{-} & C_{+}%
\end{array}
]e^{At}\left[
\begin{array}{c}
-C_{+}^{T} \\
C_{-}^{T}%
\end{array}
\right]S_-^\# , & t\geq 0  \\
0 , & t<0  %
\end{array}
\right. .  \label{eq:io}
\end{eqnarray}
(Later we use $g_{G^-}^{jk}$ and $g_{G^+}^{jk}$ ($j,k=1,\ldots,m$) to denote the entries of $g_{G^-}$ and $g_{G^+}$ on the $j$th row and $k$th column, respectively.) The \emph{impulse response function} for the system $G$ is
\begin{equation}\label{eq:gg}
g_{G}(t):=\left\{
\begin{array}{ll}
\delta(t)S-Ce^{At}C^{\flat}S, & t\geq 0   \\
0, & t<0 %
\end{array}
\right. .
\end{equation}
It can be checked  that $g_{G}(t)$ defined in Eq. (\ref{eq:gg}) is in the form of
\begin{equation}
g_{G}(t)=\Delta\left( g_{G^{-}}(t),g_{G^{+}}(t)\right) .  \label{eq:impulse}
\end{equation}
Therefore the input-output relation (\ref{eq:out_tf_0}) may be expressed in a more compact form
\begin{equation}\label{eq:out_tf}
\breve{b}_{out}(t) = Ce^{A(t-t_0)}\breve{a} +  \int_{t_0}^{t}g_{G}(t-r)\breve{b}(r)dr.
\end{equation}

\subsection{Steady-state input-output relations}\label{sec:i_o_stationary}

Assume that the system (\ref{system-a}) is asymptotically stable.  Letting $t_0 \to  -\infty$ and noticing (\ref{eq:gg}), Eq. (\ref{eq:out_tf}) becomes
\begin{equation} \label{eq:out_tf3}
\breve{b}_{out}(t) = \int_{-\infty}^\infty g_{G}(t-r)\breve{b}(r)dr.
\end{equation}

Let $\Xi_G[s]$, $\Xi_{G^-}[s]$, $\Xi_{G^+}[s]$, $\breve b[s]$ and $\breve{b}_{out}[s]$ denote the two-sided Laplace transforms of $g_G(t)$, $g_{G^-}(t)$, $g_{G^+}(t)$, $\breve b(t)$ and $\breve b_{out}(t)$, respectively. Then by the above definitions and standard properties of the two-sided Laplace transform we have the transfer function relation
$\breve{b}_{out}[s]  =\Xi_G[s]\breve{b}[s]$, where
\begin{equation}\label{eq:G_omega}
\Xi_G[s] = \Delta(  \Xi_{G^-}[s], \Xi_{G^+}[s] ) .
\end{equation}

\subsection{Flat-unitary property and stable inversion}\label{sec:identity}

For later use we record here some important inversion results for quantum linear systems.

\begin{proposition} \label{prop:identity}   
The reciprocal of the transfer function  $\Xi_G[s]$ is given by
\begin{equation}
\Xi_G[s]^{-1} = \Xi_G[-s^\ast]^\flat = \Delta(  \Xi_{G^-}[-s^\ast]^\dag, - \Xi_{G^+}[-s]^T ).
\label{eq:G_flat}
\end{equation}%
Consequently, we have the fundamental flat-unitary relation
\begin{equation}\label{eq:unitary}
\Xi_G[i\omega]^\flat  \Xi_G[i\omega] = \Xi_G[i\omega] \Xi_G[i\omega]^\flat = I_{2m},~~ \forall \omega \in \mathbb{R} .
\end{equation}
\end{proposition}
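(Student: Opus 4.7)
The plan is to reduce everything to a state-space calculation. Under the standing asymptotic stability assumption, taking the two-sided Laplace transform of (\ref{eq:gg}) gives $\Xi_G[s] = S - C(sI - A)^{-1} C^\flat S$. A direct sandwich using $J^{2} = I$, $(X^\flat)^\flat = X$, and $J(s^\ast I - A^\dag)^{-1} J = (s^\ast I - A^\flat)^{-1}$ yields $\Xi_G[s]^\flat = S^\flat - S^\flat C(s^\ast I - A^\flat)^{-1} C^\flat$, so substituting $s \mapsto -s^\ast$ gives
\begin{equation*}
\Xi_G[-s^\ast]^\flat \;=\; S^\flat - S^\flat C(-sI - A^\flat)^{-1} C^\flat .
\end{equation*}

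The real substance is two structural identities forced by the physical parametrisation. Since $S = \Delta(S_-,0)$ with $S_- S_-^\dag = I_m$, one has $S S^\flat = S^\flat S = I_{2m}$. Since $\Omega_-^\dag = \Omega_-$ and $\Omega_+^T = \Omega_+$, the matrix $\Delta(\Omega_-,\Omega_+)$ is Hermitian, so a short computation gives $\bigl(iJ\Delta(\Omega_-,\Omega_+)\bigr)^\flat = -iJ\Delta(\Omega_-,\Omega_+)$; combined with $(C^\flat C)^\flat = C^\flat C$ and the defining formula $A = -\frac{1}{2} C^\flat C - iJ\Delta(\Omega_-,\Omega_+)$, this produces the doubled-up physical realisability identity $A + A^\flat + C^\flat C = 0$.

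With these in hand I multiply out $\Xi_G[s]\, \Xi_G[-s^\ast]^\flat$; the $SS^\flat$ factors collapse and what remains is
\begin{equation*}
I_{2m} - C(sI - A)^{-1} C^\flat - C(-sI - A^\flat)^{-1} C^\flat + C(sI - A)^{-1} C^\flat C\, (-sI - A^\flat)^{-1} C^\flat .
\end{equation*}
In the cross-term I substitute $C^\flat C = -(A + A^\flat)$ and then use the telescoping identity $A + A^\flat = -(sI - A) - (-sI - A^\flat)$; after partial fractions the cross-term cancels against exactly the two middle terms, leaving $I_{2m}$. The reverse product is symmetric, so $\Xi_G[s]^{-1} = \Xi_G[-s^\ast]^\flat$, and specialising to $s = i\omega$ (where $-s^\ast = s$) gives (\ref{eq:unitary}). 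The block form in (\ref{eq:G_flat}) drops out from unpacking $J(\cdot)^\dag J$ blockwise on $\Xi_G[s] = \Delta(\Xi_{G^-}[s], \Xi_{G^+}[s])$: the Laplace-domain convention places $\Xi_{G^+}[s^\ast]^\#$ in the lower-left, and the substitution $s \mapsto -s^\ast$ is exactly what converts the $s^\ast$ in the off-diagonal blocks back to $-s$, producing the $-\Xi_{G^+}[-s]^T$ in the upper-right. The main obstacle is spotting and verifying the realisability identity $A + A^\flat + C^\flat C = 0$ from the physical parameters; once it is in hand, the rest is bookkeeping.
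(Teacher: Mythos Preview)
Your argument is correct. The state-space expression $\Xi_G[s] = S - C(sI-A)^{-1}C^\flat S$, the computation of $\Xi_G[-s^\ast]^\flat$, the physical realisability identity $A + A^\flat + C^\flat C = 0$, and the telescoping cancellation all check out; the blockwise unpacking of $\Delta(\cdot,\cdot)$ under the Laplace convention also lands on exactly the displayed form $\Delta(\Xi_{G^-}[-s^\ast]^\dag, -\Xi_{G^+}[-s]^T)$.

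The paper, by contrast, does not give a self-contained argument: it simply invokes \cite[Eq.~(74)]{GJN10a} for (\ref{eq:G_flat}) and then sets $s=i\omega$ to obtain (\ref{eq:unitary}). So your route is genuinely different in that it is a direct state-space verification built on the realisability identity $A + A^\flat + C^\flat C = 0$, rather than an appeal to an external result. What your approach buys is transparency---one sees exactly which structural features of the physical parametrisation ($S$ flat-unitary, $\Delta(\Omega_-,\Omega_+)$ Hermitian) are responsible for the flat-unitary property---at the cost of a page of resolvent algebra. The paper's approach is shorter but opaque unless the reader has \cite{GJN10a} to hand.
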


\begin{proof}
Equation (\ref{eq:G_flat}) follows from \cite[Eq. (74)]{GJN10a}, and the flat-unitary relation (\ref{eq:unitary}) follows on setting $s=i\omega$.
\end{proof}

We now apply the results and methods of \cite{Sog10} to find a stable inverse of the quantum linear system $G$.
Define
\begin{equation}
g_{G^{-1}}(t):=\mathscr{L}_{b}^{-1}\{   \Xi_G[s]^{-1}  \}(t) ,
\label{eq:stable-inverse}
\end{equation}
where $\mathscr{L}_{b}^{-1}$ is the inverse two-sided Laplace transform \cite[Chapter 10]{WRL61}.

On the basis of Proposition \ref{prop:identity}, the following result can be established.
\begin{lemma}\label{lem:G_inv}
Assume that the system  $G$ (Eq. (\ref{system-a})) is asymptotically stable. Then the impulse response of the stable inverse of $G$ is given by
\begin{equation}
g_{G^{-1}}(t)=\Delta \left( g_{G^{-}}(-t)^{\dag },-g_{G+}(-t)^{T}\right) = \left\{
\begin{array}{ll}
0, & t > 0 \\
S^{\flat}\delta(t)-S^{\flat}Ce^{-A^{\flat}t}C^{\flat}, & t \leq 0%
\end{array}
\right. .
\label{eq:G-inv}
\end{equation}
\end{lemma}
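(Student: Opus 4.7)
The plan is to derive the lemma directly from Proposition~\ref{prop:identity} by applying the inverse two-sided Laplace transform $\mathscr{L}_b^{-1}$ to each block of the identity $\Xi_G[s]^{-1} = \Delta(\Xi_{G^-}[-s^\ast]^\dag, -\Xi_{G^+}[-s]^T)$. First I would record two elementary transform correspondences that hold componentwise: if $F[s]$ is the transform of $f(t)$, then $F[-s]$ is the transform of $f(-t)$; and applying the conjugate transpose to $F[\cdot]$ evaluated at $-s^\ast$ matches a conjugate transpose applied to the time-reversed original, because the complex conjugate buried in $-s^\ast$ cancels the one introduced by $\dag$, leaving $f(-t)^\dag$. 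Reading these off block by block immediately gives $g_{G^{-1}}(t) = \Delta(g_{G^-}(-t)^\dag, -g_{G^+}(-t)^T)$, which is the first equality asserted.

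Because $g_{G^-}$ and $g_{G^+}$ are supported on $t\geq 0$ by Eq.~(\ref{eq:io}), their compositions with $t\mapsto -t$ are supported on $t\leq 0$, giving $g_{G^{-1}}(t)=0$ for $t>0$ at once. For the explicit form on $t\leq 0$, rather than substituting $-t$ into the formulas for $g_{G^-}$ and $g_{G^+}$ separately, I would use the compact version (\ref{eq:gg}), $g_G(t)=\delta(t)S - Ce^{At}C^{\flat}S$ on $t\geq 0$, together with the block identity $\Delta(U,V)^{\flat} = \Delta(U^\dag, -V^T)$, which says that the operation performed on the two blocks above is nothing but $\flat$. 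Thus $g_{G^{-1}}(t) = g_G(-t)^{\flat}$. Using the involution $(X^{\flat})^{\flat}=X$ (which rests on $J^{2}=I$), the anti-homomorphism $(ABC)^{\flat} = C^{\flat} B^{\flat} A^{\flat}$, and $(e^{-At})^{\flat}=e^{-A^{\flat}t}$, this collapses to $\delta(t)S^{\flat}-S^{\flat}C e^{-A^{\flat}t}C^{\flat}$ for $t\leq 0$, matching the second form in the lemma.

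The main delicate point is the bookkeeping on the frequency side: one must justify that the inverse two-sided Laplace transform of $\Xi_G[-s^\ast]^{\flat}$ really is anti-causal, and this is where the asymptotic stability hypothesis enters. Because $A$ is Hurwitz, the poles of $\Xi_G[s]$ lie in the open left half-plane, so the poles of $\Xi_G[-s^\ast]^{\flat}$ lie in the open right half-plane, and the region of convergence containing the imaginary axis is the appropriate left half-plane, producing a left-supported distribution. A secondary subtlety is the $\delta(t)$ piece from the feedthrough term $S$, which must be carried as a distribution throughout and transforms to itself under $t\mapsto -t$; beyond this, the argument is algebraic once Proposition~\ref{prop:identity} and the two Laplace correspondences above are in hand.
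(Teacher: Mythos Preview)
Your proposal is correct and follows essentially the same approach as the paper: both invoke Proposition~\ref{prop:identity} together with the two-sided Laplace identities $\mathscr{L}_b\{f(-t)\}(s)=f[-s]$ and $\mathscr{L}_b\{f(-t)^\ast\}(s)=f[-s^\ast]^\ast$ to obtain the first equality, and then derive the explicit anti-causal formula from the system-matrix expressions for the impulse response. Your route to the second equality via the compact identity $g_{G^{-1}}(t)=g_G(-t)^{\flat}$ is a mild streamlining of the paper's direct substitution into the definitions of $g_{G^-}$ and $g_{G^+}$, and your remarks on the role of the Hurwitz hypothesis in selecting the anti-causal region of convergence are a welcome addition the paper leaves implicit.
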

\begin{proof}
By definition of the two-sided Laplace transform, we have $\mathscr{L}_b\{  f(-t)^\ast \} (s) = f[-s^\ast]^\ast$ and $\mathscr{L}_b\{  f(-t) \} (s) = f[-s]$. Combining this with expression (\ref{eq:G_flat}) in Proposition \ref{prop:identity}, we obtain the first equality in (\ref{eq:G-inv}). The second equality in (\ref{eq:G-inv}) follows from the definitions of the impulse responses $g_{G^-}(t)$ and $g_{G^+}(t)$ in terms of the system matrices.
\end{proof}

\subsection{Gaussian system (single mode) states}
\label{sec:states-mode}

A stationary Gaussian system state $\rho_s $ on $ \mathfrak{H}_G$ is specified by its characteristic function \cite{Gough05},
\begin{equation}\label{eq:system-gaussian_multi}
\mathrm{Tr}[\rho_s\exp(i\breve{z}^\dag\breve{a})] = \exp(-\frac{1}{2}\breve{z}^\dag \Sigma \breve{z}
+i \breve z^\dag \breve \beta
),  ~~ \forall z \in \mathbb{C}^n ,
\end{equation}
where $\breve{\beta} = \bar {\breve a}(t) = \mathrm{Tr}[\rho_s \breve a(t)]$, and
 $\Sigma =  \mathrm{Tr}[\rho_s ( \breve a (t)- \breve{\beta}) (\breve a (t) -  \breve{\beta}   )^\dag]$  is a non-negative Hermitian matrix.
 In general, $\Sigma$ has the form
 \begin{equation}
\Sigma = \left[ \begin{array}{cc}
             I_n +N^T  & M \\
             M^\dag & N
           \end{array}
  \right] .
\end{equation}
  In particular, the \emph{ground} or\emph{ vacuum} state $\vert\phi\rangle$ is specified by $\beta =0$ and $  \Sigma_\phi = \left[ \begin{array}{cc}
             I_n & 0 \\
             0 & 0_n
           \end{array}
  \right]$.

\subsection{Gaussian field states} \label{sec:states-gaussian-field}

Depending on the nature of the boson fields, input signals $b(t)$ can be in various states. In this section we consider $m$-channel Gaussian field states $\rho_g$, \cite{Gough05}. Given a function $f \in L^2(\mathbb{R},\mathbb{C}^m)$, define an integral functional $\mathfrak{R}$ on the Hilbert space $L^2(\mathbb{R},\mathbb{C}^{2m})$ to be
\begin{equation}
(\mathfrak{R}\breve{f})(t) : = \int_{-\infty}^\infty R_g(t,r)\breve{f}(r)dr ,
\end{equation}
where $R_g(t,r)=R_g(r,t)^\dag \geq 0$. The $m$-channel Gaussian field state can be specified by the characteristic function
\begin{equation}
\mathrm{Tr}[\rho_g\exp(i\int_{-\infty}^\infty \breve{f}(t)^\dag \breve{b}(t)dt)] = \exp(-\frac{1}{2}\langle\breve{f}\vert \mathfrak{R} \breve{f}\rangle + i \langle \mathfrak{\breve{m}}\vert \breve{f} \rangle),
\end{equation}
where $ \mathfrak{m} \in L^2(\mathbb{R},\mathbb{C}^m)$, and $\langle \cdot \vert \cdot \rangle$ denotes the inner product in the Hilbert space $L^2(\mathbb{R},\mathbb{C}^{2m})$. It can be checked that the mean value is $\mathrm{Tr}[\rho_g\breve{b}(t)] = \mathfrak{\breve{m}}(t)$ and the covariance function is
\begin{equation}
\mathrm{Tr}[\rho_g(\breve{b}(t)-\mathfrak{\breve{m}}(t))(\breve{b}^\dag(r)-\mathfrak{\breve{m}}(r)^\dag)] = R_g(t,r), ~~ \forall r,t \in \mathbb{R}.
\end{equation}
In general, the covariance function $R_g(t,r)$ has the form
\begin{equation}\label{R_field}
R_g(t,r) = \left[
\begin{array}{cc}
  \delta(t-r)+\nu(r,t)^T & \mu(t,r) \\
  \mu(r,t)^\dag & \nu(t,r)
\end{array}
\right]
\end{equation}
for real $r$ and $t$. When $\rho_g$ is stationary, $R_g(t,r)$ depends on the difference between $t$ and $r$, instead of their particular vales. In this case, we may use $R_g(\tau)$ to replace  $R_g(t,r)$. In particular, the quantum \emph{vacuum} field state $\vert 0\rangle$ is specified by
\begin{equation}\label{R_in_vacuum}
R_0 (\tau) =  \delta(\tau)\left[
\begin{array}{cc}
  I_m & 0 \\
  0 & 0_m
\end{array}
\right].
\end{equation}

\subsection{Photon field states} \label{sec:states-photon-field}

Now we introduce another type of field states:  the continuous-mode single photon pure states. In the one channel case we denote it by $\vert \Psi \rangle = \vert 1_\nu \rangle$, as defined in \cite[Eq.~(6.3.4)]{RL00}, \cite[Eq.~(9)]{Milburn08}
\begin{equation}
\vert 1_\nu \rangle = B^\ast(\nu) \vert 0  \rangle,
\label{eq:xi-create}
\end{equation}
where $\vert 0 \rangle$ is the vacuum state of the field as defined in Sec.~\ref{sec:states-gaussian-field}, and $B^\ast(\nu) := \int_{-\infty}^\infty \nu(r) b^\ast(r)dr$. Here, $\nu$ is a complex-valued function such that $\int_{-\infty}^\infty \vert \nu(r) \vert^2 dr = 1$.
Expression (\ref{eq:xi-create}) says that  the single photon wavepacket is created from the vacuum using the field operator $B^\ast(\nu)$. Using the relation $b(t) \vert 1_\nu \rangle = \nu(t) \vert 0 \rangle$,  we see that the operator $B(\nu) := \int_{-\infty}^\infty \nu(r)^\ast b(r)dr $  annihilates a photon, resulting in  the vacuum: $B(\nu) \vert 1_\nu \rangle  = \vert 0 \rangle$. 

The field operator $b(t)$ is a zero mean quantum stochastic process with respect to the single photon state.
The covariance function is given by
\begin{eqnarray}
R(t,r) =
\mathbb{E}_{1_\nu}[ \breve b(t) \breve b^\dag(r) ]
=\delta(t-r) \left[\begin{array}{cc}
                     1 & 0 \\
                     0 & 0
                   \end{array}
 \right] +
\left[ \begin{array}{cc}
 \nu(r)^\ast \nu(t)  &  0
\\
0 & \nu(r)\nu(t)^\ast
\end{array} \right].
\label{eq:R-photon}
\end{eqnarray}

The gauge process $\Lambda(t)$ (recall Section \ref{sec:lqss}) for a single photon channel takes the form $\Lambda(t) = \int_0^t n(r) dr$, where $n(t) = b^\ast(t) b(t)$ is the number operator for the field. The intensity of the field is the mean $\bar n(t) = \langle 1_\nu \vert n(t) \vert 1_\nu \rangle = \vert \nu(t) \vert^2$, an important physical quantity that determines the probability of photodetection per unit time.

\subsection{Mean and covariance transfer}
\label{sec:covariance-transfer}  

In this section  some basic covariance transfer results for the quantum linear system $G$ are presented, which are the quantum counterparts of the well-known classical results, e.g. \cite[Sec. 1.10.3]{KS72}, but adapted to take into account the non-commuting system and field variables.

Consider a quantum linear system $G$ initialized at time $t_0$ in a state $\rho = \vert\phi\rangle\langle\phi\vert \otimes \rho_{field}$, where $\vert \phi\rangle$ is the vacuum  system state and the field state $\rho_{field}$ satisfies $\mathrm{Tr}[\rho_{field}\breve{b}(t)]=0$ (particular choices of $\rho_{field}$ will be made below).
By taking expectations in Eq. (\ref{system-a}), we find that the mean $\bar {\breve a}(t) =\mathrm{Tr}[\rho\breve a(t)]$ satisfies Eq. (\ref{eq:x-bar-q}) since the field has mean zero. Also, since the system is initialized in the ground state $|\phi\rangle$, we have $\bar {\breve a}(t) = \bar {\breve a}(t_0) =0$ for all $t \geq t_0$.
Define the matrix
\begin{equation} \label{eq:Gamma}
\Gamma(t) = \mathrm{Tr}[\rho \breve a(t) \breve b^\dag (t)],
\end{equation}
and a matrix $F$ by
\begin{equation} \label{eq:F}
F dt  =\mathrm{Tr}[\rho_{field} d\breve B(t) d \breve B^\dag(t)].
\end{equation}

\begin{lemma}   \label{lemma:Sigma-general}
The quantum linear system $G$ initialized in the state $\rho$ (ground system state and zero mean field state) has zero mean $\bar {\breve a}(t) =0$ for all $t \geq t_0$ and the covariance matrix
$\Sigma(t) = \mathrm{Tr}[\rho\breve a(t) \breve a^\dag(t)]$
 satisfies the differential equation
 \begin{eqnarray}
  \dot{\Sigma} (t) &=&  A \Sigma (t) + \Sigma (t) A^\dag   + BS \Gamma^\dag(t) + \Gamma(t) S^\dag B^\dag + BSFS^\dag B^\dag,
  \label{eq:Sigma-dyn-general}
\end{eqnarray}
with initial condition $\Sigma (t_0) = \Sigma_\phi$, where $F$ is given in Eq. (\ref{eq:F}).
\end{lemma}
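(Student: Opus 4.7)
The plan is to handle the mean and covariance statements separately. The mean assertion is essentially immediate: taking $\mathrm{Tr}[\rho \,\cdot\,]$ on both sides of (\ref{system-a}) and using the hypothesis $\mathrm{Tr}[\rho_{field}\breve{b}(t)] = 0$ kills the driving term, leaving $\dot{\bar{\breve a}}(t) = A \bar{\breve a}(t)$. The vacuum system initialization $\rho|_{\mathfrak{H}_G} = |\phi\rangle\langle\phi|$ gives $\bar{\breve a}(t_0) = 0$ (since vacuum expectations of $a$ and $a^\#$ vanish), so $\bar{\breve a}(t) \equiv 0$ for all $t \geq t_0$, which is (\ref{eq:x-bar-q}).

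For the covariance, the approach is a direct application of the non-commutative quantum Ito product rule to $\breve a(t) \breve a^\dag(t)$, treating (\ref{system-a}) as the QSDE $d\breve a(t) = A\breve a(t)\,dt + BS\, d\breve B(t)$. By Ito's rule,
\[
d(\breve a \breve a^\dag) = (d\breve a)\,\breve a^\dag + \breve a\,(d\breve a)^\dag + (d\breve a)(d\breve a)^\dag .
\]
Substituting and collecting produces a drift contribution $(A\breve a\breve a^\dag + \breve a\breve a^\dag A^\dag)\,dt$, cross terms $BS\,d\breve B(t)\,\breve a^\dag(t) + \breve a(t)\,d\breve B^\dag(t)\,S^\dag B^\dag$, and a second-order Ito correction $BS\,(d\breve B(t)\,d\breve B^\dag(t))\,S^\dag B^\dag$. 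Taking $\mathrm{Tr}[\rho\,\cdot\,]$ term by term, the drift contributes $A\Sigma(t) + \Sigma(t)A^\dag$, and the Ito correction contributes $BSFS^\dag B^\dag$ by the very definition (\ref{eq:F}) of $F$ (the system factor of $\rho$ drops out because $d\breve B\,d\breve B^\dag$ is a Fock-space quantity).

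The main technical step, which I expect to be the principal obstacle, is the identification of the two cross-term expectations with $\Gamma^\dag(t)$ and $\Gamma(t)$. One cannot simply commute $d\breve B(t)$ past the adapted operator $\breve a^\dag(t)$, and because $\rho_{field}$ is not assumed to be vacuum, the pairing does not vanish by adaptedness as in the standard Hudson--Parthasarathy vacuum argument. The cleanest route is a distributional one: interpret $d\breve B(t)$ as $\breve b(t)\,dt$ inside the time-ordered product, so that $\mathrm{Tr}[\rho\, d\breve B(t)\,\breve a^\dag(t)] = \Gamma^\dag(t)\,dt$ by (\ref{eq:Gamma}) (using that $\Gamma^\dag(t) = \mathrm{Tr}[\rho\,\breve b(t)\,\breve a^\dag(t)]$ by the matrix-adjoint identity for expectations of operator matrices), and analogously $\mathrm{Tr}[\rho\,\breve a(t)\,d\breve B^\dag(t)] = \Gamma(t)\,dt$. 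Assembling the five pieces then yields (\ref{eq:Sigma-dyn-general}). The initial condition $\Sigma(t_0)=\Sigma_\phi$ is immediate: at $t=t_0$, $\breve a(t_0)=\breve a$ acts only on $\mathfrak H_G$, so $\mathrm{Tr}[\rho\,\breve a\breve a^\dag] = \langle\phi|\breve a\breve a^\dag|\phi\rangle = \Sigma_\phi$, independent of $\rho_{field}$.
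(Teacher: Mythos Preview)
Your proposal is correct and follows essentially the same route as the paper: apply the quantum Ito product rule to $\breve a(t)\breve a^\dag(t)$, substitute the QSDE $d\breve a = A\breve a\,dt + BS\,d\breve B$, and take expectations to produce the five terms of (\ref{eq:Sigma-dyn-general}). The paper's proof is terser on the cross terms---it simply records the adjoint identity $\mathrm{Tr}[\rho\, d\breve B(t)\,\breve a^\dag(t)] = \mathrm{Tr}[\rho\,\breve a(t)\, d\breve B^\dag(t)]^\dag$ rather than your distributional reading---but the substance is identical, and your added treatment of the mean and of the initial condition $\Sigma(t_0)=\Sigma_\phi$ fills in points the paper handles in the surrounding text or leaves implicit.
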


\begin{proof}
The proof of Eq. (\ref{eq:Sigma-dyn-general}) follows by taking expectations of the  differential
\begin{align}
d( \breve a(t) \breve a^\dag(t) ) &=  (d \breve a(t) ) \breve a^\dag(t) + \breve a(t) d \breve a^\dag(t) + d \breve a(t) d \breve a^\dag(t) \\
&=
A  \breve a(t) \breve a^\dag(t)dt  + BS d\breve B(t) \breve a^\dag(t) +  \breve a(t) \breve a^\dag(t)  A^\dag dt + \breve a(t) d\breve B^\dag(t) S^\dag B^\dag + BS F S^\dag B^\dag dt  \nonumber
\end{align}
and noticing that $\mathrm{Tr}[\rho d\breve{B}(t)\breve{a}^\dag(t)]= \mathrm{Tr}[\rho\breve{a}(t)d\breve{B}^\dag(t)]^\dag$.
\end{proof}

Note that expected values of quadratic forms may easily be evaluated in terms of $\Sigma(t)$, for example, $\mathrm{Tr}[\rho \breve a^\dag (t) M \breve a(t)]
= \mathrm{Tr}[M^T \Sigma(t)^T] - \mathrm{Tr}[M^T J]$.

\begin{remark}{\rm
Further information regarding the dynamics of $\Gamma(t)$ will be given in Section \ref{sec:photon-t0} for the case of multichannel photon fields.
}
\end{remark}

Suppose now $\breve b_{out}(t)$ is the steady-state output field defined by (\ref{eq:out_tf3}).
Define the input and output covariances $R_{in}(t,r):=\mathrm{Tr}[\rho\breve b(t) \breve b^\dag (r)]$  and $R_{out}(t,r):= \mathrm{Tr}[\rho \breve b_{out}(t) \breve b_{out}^\dag (r)]$.

\begin{theorem}\label{thm:covariance-transfer}
Assume that the system $(\ref{system-a})$ is asymptotically stable. Let the input field have covariance $R_{in}(t,r)$. Then the steady-state output covariance   is given by
\begin{eqnarray}
R_{out}(t,r) = \int_{-\infty}^\infty \int_{-\infty}^\infty g_{G}(t-\tau_1) R_{in}(\tau_1, \tau_2) g_G(r-\tau_2)^\dag d\tau_1 d\tau_2 .
\label{eq:covariance-transfer}
\end{eqnarray}
\end{theorem}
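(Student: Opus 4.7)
The plan is a direct computation: substitute the steady-state input-output relation (\ref{eq:out_tf3}) into the definition $R_{out}(t,r) = \mathrm{Tr}[\rho \breve{b}_{out}(t) \breve{b}_{out}^\dag(r)]$, and exploit the fact that the impulse response $g_G(\cdot)$ is a c-number (matrix-valued, but not operator-valued) function, so it commutes with everything and may be pulled outside the trace.

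First I would write out $\breve{b}_{out}(t) = \int_{-\infty}^\infty g_G(t-\tau_1)\breve{b}(\tau_1)\,d\tau_1$ and take its adjoint, being careful that the adjoint reverses the order and conjugate-transposes the c-number matrix:
\begin{equation*}
\breve{b}_{out}^\dag(r) = \int_{-\infty}^\infty \breve{b}^\dag(\tau_2) g_G(r-\tau_2)^\dag \, d\tau_2 .
\end{equation*}
Forming the outer product $\breve{b}_{out}(t)\breve{b}_{out}^\dag(r)$ as a double integral and applying $\mathrm{Tr}[\rho\,\cdot\,]$, I would then pull the c-number factors $g_G(t-\tau_1)$ and $g_G(r-\tau_2)^\dag$ outside the trace, leaving
\begin{equation*}
R_{out}(t,r) = \int_{-\infty}^\infty\!\!\int_{-\infty}^\infty g_G(t-\tau_1)\,\mathrm{Tr}\!\left[\rho \breve{b}(\tau_1)\breve{b}^\dag(\tau_2)\right]\!g_G(r-\tau_2)^\dag\,d\tau_1 d\tau_2 ,
\end{equation*}
and recognise the inner trace as $R_{in}(\tau_1,\tau_2)$ by definition, giving (\ref{eq:covariance-transfer}).

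The only thing to check carefully is that the manipulations are well defined. The asymptotic stability hypothesis ($A$ Hurwitz) ensures that $g_G(t)$ decays exponentially on $t\ge 0$ and vanishes on $t<0$, which (together with mild tempered-growth assumptions on $R_{in}$, satisfied by the Gaussian and photon covariances considered later in the paper) justifies the Fubini-type interchange of the double integral with the expectation $\mathrm{Tr}[\rho\,\cdot\,]$. It also guarantees that the steady-state form (\ref{eq:out_tf3}) obtained by letting $t_0\to -\infty$ is valid, so that no initial-condition term $Ce^{A(t-t_0)}\breve{a}$ survives to contaminate the calculation.

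I do not anticipate a genuinely hard step; the proof is essentially the quantum analogue of the classical Wiener--Khinchin style covariance transfer, with the non-commutativity of $\breve{b}(\tau_1)$ and $\breve{b}^\dag(\tau_2)$ absorbed into $R_{in}$ itself (the canonical singular commutator is precisely the $\delta(\tau_1-\tau_2)$ piece of $R_{in}$). The one thing that requires vigilance is operator ordering: writing the adjoint so that $g_G(r-\tau_2)^\dag$ sits to the right of $\breve{b}^\dag(\tau_2)$ is what produces the exact form stated in (\ref{eq:covariance-transfer}), and since field operators are never transposed past each other, no commutator corrections arise.
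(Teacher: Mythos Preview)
Your approach is correct and is exactly the natural direct computation. The paper itself does not supply a proof of this theorem; it is stated without proof in Section~\ref{sec:covariance-transfer} as the quantum counterpart of the well-known classical covariance-transfer relation (with a reference to \cite[Sec.~1.10.3]{KS72}), so your argument is in fact more detailed than what the paper provides.
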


Now suppose that $\breve b(t)$ is stationary with respect to the field state $\rho_{field}$. Write $R[i\omega]$ for the spectral density matrix (where $R[s]$ is the two-sided Laplace transform of $R(\tau)$).

\begin{theorem}     \label{thm:spectral-transfer}
Assume that the system $(\ref{system-a})$ is asymptotically stable. Let the input field have spectral density matrix  $R_{in}[i\omega]$. Then the output spectral density matrix is given by
\begin{eqnarray}
R_{out}[i\omega] = \Xi_G[ i\omega] R_{in}[i\omega]  \Xi_G[i\omega]^\dag  .
\label{eq:R-transfer}
\end{eqnarray}
\end{theorem}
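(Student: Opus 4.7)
The plan is to deduce Theorem~\ref{thm:spectral-transfer} directly from Theorem~\ref{thm:covariance-transfer} by reducing the double convolution in \eqref{eq:covariance-transfer} to a product in the frequency domain. First I would use stationarity of the input to write $R_{in}(\tau_1,\tau_2) = R_{in}(\tau_1-\tau_2)$, and then perform the change of variables $u_1 = t-\tau_1$, $u_2 = r-\tau_2$ in the integral
\begin{equation*}
R_{out}(t,r) = \int_{-\infty}^\infty \int_{-\infty}^\infty g_G(t-\tau_1)\, R_{in}(\tau_1-\tau_2)\, g_G(r-\tau_2)^\dag \, d\tau_1 d\tau_2 .
\end{equation*}
This rewrites the integrand with argument $R_{in}((t-r) - (u_1-u_2))$ and shows that $R_{out}(t,r)$ depends only on $\tau := t-r$; hence the output is stationary and we may write $R_{out}(\tau)$.

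Next I would recognize the resulting expression as a triple convolution $R_{out} = g_G \ast R_{in} \ast \tilde g_G$, where $\tilde g_G(t) := g_G(-t)^\dag$. Taking the two-sided Laplace transform at $s=i\omega$ and invoking the convolution theorem for $\mathscr{L}_b$ (valid along the imaginary axis under the asymptotic stability assumption on $G$, which places the relevant strip of absolute convergence to include $s=i\omega$), I obtain
\begin{equation*}
R_{out}[i\omega] = \Xi_G[i\omega]\, R_{in}[i\omega]\, \widetilde{\Xi}_G[i\omega],
\end{equation*}
where $\widetilde{\Xi}_G[i\omega]$ is the Fourier transform of $\tilde g_G$. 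A short calculation using $\mathscr{L}_b\{f(-t)^\dag\}(s) = f[-s^\ast]^\dag$ together with the evaluation $s=i\omega$ gives $\widetilde{\Xi}_G[i\omega] = \Xi_G[i\omega]^\dag$, which yields \eqref{eq:R-transfer}.

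The main technical point to be careful about is the interchange of the convolution and the Laplace transform, since the input covariance $R_{in}(\tau)$ typically contains a $\delta$ distribution (as in \eqref{R_in_vacuum}) and possibly non-decaying components, so the objects being transformed are tempered distributions rather than $L^1$ functions. The asymptotic stability of $G$ ensures $g_G(t)$ is integrable away from its $\delta$ component, and $\Xi_G[i\omega]$ is bounded on the imaginary axis by the flat-unitarity relation \eqref{eq:unitary}, so each factor is a well-defined distribution whose product makes sense pointwise in $\omega$. I would mention this justification rather than carry out the distributional calculation in detail, since the computation itself is essentially a classical frequency-domain argument once the noncommutative ordering of $\breve b(t)$ and $\breve b^\dag(r)$ has been correctly encoded into the $R_{in}$ and $R_{out}$ conventions of Sec.~\ref{sec:states-gaussian-field}.
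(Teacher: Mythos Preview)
Your derivation is correct and is exactly the standard route: specialize Theorem~\ref{thm:covariance-transfer} to the stationary case, change variables to exhibit $R_{out}$ as a function of $\tau=t-r$, write the result as the convolution $g_G \ast R_{in} \ast \tilde g_G$ with $\tilde g_G(t)=g_G(-t)^\dag$, and take the two-sided Laplace transform along $s=i\omega$ using $\mathscr{L}_b\{g_G(-t)^\dag\}(i\omega)=\Xi_G[i\omega]^\dag$. Note, however, that the paper itself does not supply a proof of Theorem~\ref{thm:spectral-transfer} (nor of Theorem~\ref{thm:covariance-transfer}); both are stated as quantum analogues of classical covariance/spectral transfer results and left to the reader. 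So there is no alternative argument in the paper to compare against---your approach is precisely the intended one.
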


If the input field state is vacuum $\rho_{in} = \vert0\rangle\langle0\vert$, then the steady-state output field state $\rho_{out}$ is a Gaussian state with covariance $R_{out}(\tau)$ which is in general not the vacuum state. However, if $C_+=0$ and $\Omega_+=0$ (\emph{passive} systems) then we have
\begin{equation} \label{R_out_vac}
R_{out} [i\omega]  = \Delta ( \Xi_{G^-}[ i\omega], 0)
\left[  \begin{array}{cc}
I & 0
\\
0 & 0
\end{array} \right]  \Delta ( \Xi_{G^-}[ i\omega], 0)^\dag
=\left[  \begin{array}{cc}
I & 0
\\
0 & 0
\end{array} \right] ,
\end{equation}
since for a passive system $\Xi_{G^-}[ i\omega] \Xi_{G^-}[ i\omega]^\dag =I$ (Eq. (\ref{eq:unitary})).
So in the passive case the output state is again the vacuum state.

\section{Output intensities of quantum linear systems driven by multichannel photon inputs}
\label{sec:photon}

We begin our study of the response of quantum linear systems to photon inputs by determining the statistics of the output field. Specifically, we consider multi-channel input signals,  each channel with one photon, as defined in Section \ref{sec:photon-def}, and then we find expressions for the  output intensities (transient, Section \ref{sec:photon-t0}, and steady state, Section \ref{sec:photon-steady}) and correlation, Section \ref{sec:photon-steady-covariance}.

\subsection{Multichannel photon fields}
\label{sec:photon-def}

We now consider $m$ field channels each of which is in a single photon state $\vert 1_{\nu_k} \rangle$, determined by possibly distinct pulse shapes $\nu_k$ satisfying the normalization condition $\int_{-\infty}^\infty \vert\nu_k(t)\vert^2 dt = 1$, $k=1,\ldots,m$.
This means that the state of the $m$-channel input is given by the tensor product
\begin{equation}\label{eq:multichannel}
\vert \Psi_{\nu} \rangle  = \vert 1_{\nu_1} \rangle \otimes \cdots \otimes  \vert 1_{\nu_m} \rangle
= \prod_{k=1}^m B^\ast_k( \nu_k) \vert 0^{\otimes m} \rangle ,
\end{equation}
where the $m$-channel  vacuum state is denoted $\vert 0^{\otimes m} \rangle = \vert 0 \rangle \otimes \cdots \otimes \vert 0 \rangle$. Here, $B^\ast_k(\nu_k)=\int_{-\infty}^\infty \nu_k(t) b^\ast_k(t) dt$ is the creation operator for the $k$-th field channel.

For convenience, we let $\mathcal{F}_0$ denote the class of $m$-channel photon input field states:

\begin{equation}\label{class_F0}
\mathcal{F}_0= \left\{\vert\Psi_{\nu}\rangle = \prod\limits_{k=1}^{m}B_k^\ast (\nu_k)\vert 0^{\otimes m} \rangle : \int_{-\infty}^\infty \vert\nu_k(t)\vert^2 dt = 1, \  k=1,2,\ldots, m  \right\}.
\end{equation}

\subsection{Output intensity when the system is initialized at time $t_0$} \label{sec:photon-t0}

In this section, we study the  intensity of output fields of the system $G$ driven by the $m$-channel input field in the class $\mathcal{F}_0$  (Eq. (\ref{class_F0})). We define this     intensity to be
\begin{equation}\label{eq:nout6}
\bar{n}_{out}(t):= \left\langle \phi \Psi_{\nu} \vert b_{out}^{\#}(t)b_{out}^{T}(t)\vert \phi \Psi_{\nu} \right\rangle_{m \times m} .
\end{equation}

We first introduce some notations.
For each $k=1,\ldots, m$, define
\begin{equation}
\vert\zeta_{k}\rangle:= \prod_{j\neq k} B_j^\ast(\nu_j)\vert 0^{\otimes m}\rangle =\vert 1_{\nu_1}\rangle\otimes \cdots \otimes  \vert 1_{\nu_{k-1}}\rangle\otimes \vert 0\rangle\otimes \vert 1_{\nu_{k+1}}\rangle\otimes \cdots \otimes  \vert 1_{\nu_m}\rangle .  \label{eq:zeta_k}
\end{equation}%
Denote
\begin{equation} \label{xi_in_photon}
\xi_{in}^-(t) = \mathrm{diag}( \nu_1(t), \ldots, \nu_m(t) ) ,
\end{equation}
and
\begin{equation}
C^{-}=[I_m \  0_m ]C, \ B^{-}=BS\left[\begin{array}{c}
                            I_m \\
                            0_m
                          \end{array}\right], \  B^{+}=BS\left[\begin{array}{c}
                            0_m \\
                            I_m
                          \end{array}\right].
\end{equation}
For $t \geq t_0$,  define classical variables%
\begin{equation}\label{eq:m-0}
m_-(t) := \left[
\begin{array}{c}
  \langle \phi\Psi_\nu \vert \breve{a}^\dag(t) \vert \phi\zeta_1 \rangle \\
  \vdots \\
  \langle \phi\Psi_\nu \vert \breve{a}^\dag(t) \vert \phi\zeta_m \rangle
\end{array}
\right]_{m \times 2n} ,  ~~ m_+(t) := \left[
\begin{array}{c}
 \langle \phi\zeta_1  \vert \breve{a}^\dag(t) \vert \phi\Psi_\nu  \rangle \\
  \vdots \\
  \langle \phi\zeta_m  \vert \breve{a}^\dag(t) \vert \phi\Psi_\nu  \rangle
\end{array}
\right]_{m \times 2n} ,
\end{equation}
and%
\begin{equation}
\Sigma_\nu(t) := \langle \phi\Psi_\nu \vert \breve{a}(t)\breve{a}^\dag(t) \vert \phi\Psi_\nu \rangle_{2n\times 2n}, ~ t \geq t_0 .
\label{eq:mm}
\end{equation}
We have the following result.
\begin{theorem}\label{thm:n_out}
When the system $G$ is driven by an $m$-channel photon input field $\vert\Psi_{\nu}\rangle$ in the class $\mathcal{F}_0$ (Eq. (\ref{class_F0})), the   output intensity $\bar{n}_{out}(t)$ is given by
\begin{align}
\bar{n}_{out}(t) = & C^{-\#}\Sigma_\nu(t)^T C^{-T}-C^{-\#}J_n C^{-T}+S_{-}^{\#}\xi_{in}^-(t)^\dag\xi_{in}^-(t)S_{-}^{T}  \label{eq:n} \\
& +S_{-}^{\#}\xi_{in}^-(t)^\dag m_{-}(t)^\#C^{-T}+C^{-\#}m_{-}(t)^T\xi_{in}^-(t)S_{-}^{T},  \nonumber
\end{align}%
where $m_{-}(t)$, $m_{+}(t)$, and $\Sigma_\nu(t)$, defined in Eqs. (\ref{eq:m-0})-(%
\ref{eq:mm}), satisfy the following differential equations:
\begin{equation}
\dot{m}_-(t) = m_-(t)A^\dag +\xi_{in}^-(t)^\dag B^{-\dag},   ~ t \geq t_0 \label{eq:m-}
\end{equation}%
\begin{equation}
\dot{m}_+(t) = m_+(t)A^\dag +\xi_{in}^-(t) B^{+\dag},  ~ t \geq t_0,  \label{eq:m+}
\end{equation}%
and
\begin{align}
\dot{\Sigma}_\nu(t) =&  A\Sigma_\nu(t)+\Sigma_\nu(t)A^\dag+B^-\xi_{in}^-(t)m_-(t)+B^+\xi_{in}^-(t)^\dag m_+(t)+ m_-(t)^\dag \xi_{in}^-(t)^\dag B^{-\dag} \label{eq:m}\\
& +m_+(t)^\dag \xi_{in}^-(t) B^{+\dag}+BS\mathrm{diag}(I_m,0_m)S^\dag B^\dag, ~ t \geq t_0, \nonumber
\end{align}
respectively, with initial conditions $m_-(t_0) = 0$, $m_+(t_0) = 0$, and $\Sigma_\nu(t_0) = \mathrm{diag}(I_m,0_m)$.
\end{theorem}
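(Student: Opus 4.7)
The plan is to establish the intensity formula by substituting the instantaneous input-output relation $b_{out}(t) = C^- \breve{a}(t) + S_- b(t)$ (the first block of (\ref{system-out})) into the definition (\ref{eq:nout6}) and expanding, and then to derive the three coupled ODEs by differentiating the defining matrix elements using the quantum stochastic differential equations for $\breve{a}(t)$ and $\breve{a}^\dag(t)$ together with the quantum Ito table (\ref{Eq:CCR2}). The single physical identity that drives the entire proof is that for the multichannel photon state in (\ref{eq:multichannel}) we have $b_k(t)|\Psi_\nu\rangle = \nu_k(t)|\zeta_k\rangle$ with $|\zeta_k\rangle$ as in (\ref{eq:zeta_k}), together with $\langle\zeta_j|\zeta_k\rangle = \delta_{jk}$. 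Consequently, the only non-vanishing photon matrix elements of the field increments of interest are $\langle\Psi_\nu|dB_k^\ast(t)|\zeta_j\rangle = \nu_k(t)^\ast\,\delta_{jk}\,dt$ and $\langle\zeta_j|dB_k(t)|\Psi_\nu\rangle = \nu_k(t)\,\delta_{jk}\,dt$; the ``cross-parity'' elements $\langle\Psi_\nu|dB_k(t)|\zeta_j\rangle$ and $\langle\zeta_j|dB_k^\ast(t)|\Psi_\nu\rangle$ both vanish because they demand overlap between a zero- or two-photon component of one channel and a single-photon component.

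For the intensity formula I would expand $b_{out}^\#(t)b_{out}^T(t)$ into four pieces. The pure system term $C^{-\#}\breve{a}^\#\breve{a}^T C^{-T}$ is handled by the canonical commutation $\breve{a}^\#\breve{a}^T = (\breve{a}\breve{a}^\dag)^T - J_n$ (entrywise application of $[\breve{a}_i,\breve{a}_j^\ast]=(J_n)_{ij}$), yielding $C^{-\#}\Sigma_\nu(t)^T C^{-T} - C^{-\#}J_n C^{-T}$. The pure field term evaluates, via the photon identity, to $S_-^\# \xi_{in}^-(t)^\dag\xi_{in}^-(t)S_-^T$ since $\langle\Psi_\nu|b_j^\ast(t)b_k(t)|\Psi_\nu\rangle = \nu_j(t)^\ast\nu_k(t)\langle\zeta_j|\zeta_k\rangle = |\nu_k(t)|^2\delta_{jk}$. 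The two cross terms become exactly $S_-^\#\xi_{in}^-(t)^\dag m_-(t)^\# C^{-T}$ and $C^{-\#} m_-(t)^T \xi_{in}^-(t) S_-^T$ upon using $\langle\Psi_\nu|\breve{a}_p^\ast(t) b_k(t)|\Psi_\nu\rangle = \nu_k(t)(m_-(t))_{kp}$ and its Hermitian conjugate, giving (\ref{eq:n}).

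For the three ODEs I would apply the quantum Ito product rule to (\ref{system-a}), namely $d\breve{a} = A\breve{a}\,dt + BS\,d\breve{B}$ and $d\breve{a}^\dag = \breve{a}^\dag A^\dag\,dt + d\breve{B}^\dag S^\dag B^\dag$, and then sandwich the resulting differentials between the appropriate bras and kets. For $m_-(t)$, the bra $\langle\Psi_\nu|$ selects only the first-half noise component $dB_k^\ast$ from $d\breve{B}^\dag$, producing a $\nu_k(t)^\ast dt$ factor that combines with the first-block $[I_m\ 0]S^\dag B^\dag = B^{-\dag}$ to give the forcing $\xi_{in}^-(t)^\dag B^{-\dag}$; for $m_+(t)$ the situation is mirrored, with the ket $|\Psi_\nu\rangle$ selecting only the second-half component $dB_k$ and producing $\xi_{in}^-(t) B^{+\dag}$. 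For $\Sigma_\nu(t)$ I would use $d(\breve{a}\breve{a}^\dag) = (d\breve{a})\breve{a}^\dag + \breve{a}(d\breve{a}^\dag) + (d\breve{a})(d\breve{a}^\dag)$; the drifts give $A\Sigma_\nu + \Sigma_\nu A^\dag$, the four system-field cross terms produce $B^-\xi_{in}^-(t) m_-(t) + B^+\xi_{in}^-(t)^\dag m_+(t)$ together with their adjoints through the same matrix-element bookkeeping used above, and the self-product $d\breve{B}\,d\breve{B}^\dag = \mathrm{diag}(I_m,0_m)dt$ (read from (\ref{Eq:CCR2})) contributes the final $BS\,\mathrm{diag}(I_m,0_m)S^\dag B^\dag$ term.

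The initial conditions follow at once from $\breve{a}(t_0)=\breve{a}$ and the vacuum choice of $|\phi\rangle$: both $m_\pm(t_0)=0$ because $\langle\phi|\breve{a}^\ast|\phi\rangle=0$ on every system mode, while $\Sigma_\nu(t_0) = \langle\phi|\breve{a}\breve{a}^\dag|\phi\rangle$ reduces to the vacuum covariance $\Sigma_\phi$ since the only non-vanishing two-point function is $\langle\phi|a_j a_k^\ast|\phi\rangle = \delta_{jk}$. I expect the main obstacle to be not conceptual but notational: one must track carefully the doubled-up decomposition of $2m$- and $2n$-vectors into their ``$-$'' (annihilation) and ``$+$'' (creation) halves, and keep distinct the operator commutators (which hold as algebraic identities independent of state) from the photon matrix elements (which need the annihilation identity $b_k(t)|\Psi_\nu\rangle = \nu_k(t)|\zeta_k\rangle$ and orthogonality $\langle\zeta_j|\zeta_k\rangle=\delta_{jk}$). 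Once these conventions are fixed, every remaining manipulation is a direct application of quantum Ito calculus.
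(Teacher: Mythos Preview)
Your proposal is correct and follows essentially the same approach as the paper. The only cosmetic difference is that the paper packages the Ito expansion of $d(\breve a\breve a^\dag)$ into its earlier Lemma~\ref{lemma:Sigma-general} (computing $\Gamma(t)$ and $F$ for the photon state and substituting into (\ref{eq:Sigma-dyn-general})), and derives (\ref{eq:n}) by citing the gauge-process differential $d\Lambda_{out}(t)$ from \cite{GJ09} rather than directly expanding $b_{out}^\#(t)b_{out}^T(t)$; both routes amount to exactly the computation you describe.
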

\begin{proof}
It is straightforward to derive Eqs. (\ref{eq:m-}) and (\ref{eq:m+}). To apply Lemma \ref{lemma:Sigma-general} to establish Eq. (\ref{eq:m}), it suffices to evaluate $\Gamma(t)$ and $F$ defined in Eqs. (\ref{eq:Gamma}) and (\ref{eq:F}) respectively. Notice that
\begin{equation}
\langle \phi\Psi_\nu \vert \breve{a}(t) b^\dag(t) \vert \phi\Psi_\nu \rangle
= m_-(t)^\dag \xi_{in}^-(t)^\dag , ~~ \langle \phi\Psi_\nu \vert \breve{a}(t) b^T(t) \vert \phi\Psi_\nu \rangle
= m_+(t)^\dag \xi_{in}^-(t).
\end{equation}
We have
\begin{equation}\label{eq:Gamma2}
 \Gamma(t) = \left[ \begin{array}{cc}
                      m_-(t)^\dag \xi_{in}^-(t)^\dag & m_+(t)^\dag \xi_{in}^-(t)
                    \end{array}
  \right] .
\end{equation}
On the other hand, for the single-photon input field, $F$ defined in Eq. (\ref{eq:F}) is
\begin{equation}\label{eq:F2}
F dt = \langle\Psi_\nu\vert d\breve B(t) d \breve B^\dag(t)\vert\Psi_\nu\rangle = \left[\begin{array}{cc}
                                                     I_m & 0 \\
                                                     0 & 0
                                                   \end{array}
 \right]dt.
\end{equation}
Substitution of Eqs. (\ref{eq:Gamma2})-(\ref{eq:F2}) into Eq. (\ref{eq:Sigma-dyn-general}) yields Eq. (\ref{eq:m}).
Finally, note that the gauge process of the output field $\Lambda _{out}(t)$
satisfies, e.g. \cite[Sec. IV]{GJ09},
\begin{align}
d\Lambda _{out}(t)& =S_{-}^{\#}d\Lambda
(t)S_{-}^{T}+S_{-}^{\#}dB^{\#}(t)L^{T}(t)+L^{\#}(t)dB^{T}(t)S_{-}^{T}+L^{%
\#}(t)L^{T}(t)dt \\
& =S_{-}^{\#}d\Lambda (t)S_{-}^{T}+S_{-}^{\#}dB^{\#}(t)\breve{a}%
^{T}(t)C^{-T}+C^{-\#}\breve{a}^{\#}(t)dB^{T}(t)S_{-}^{T}+C^{-\#}%
\breve{a}^{\#}(t)\breve{a}^{T}(t)C^{-T}dt . \nonumber
\end{align}%
Noticing that $d \Lambda_{out}  = b_{out}^{\#}(t)b_{out}^{T}(t) dt$ and $\breve{a}(t)\breve{a}^\dag(t) = ( \breve{a}(t)^\#\breve{a}^T(t) )^T + J_n $, it can be readily shown that $\bar{n}_{out}(t)$ defined in Eq. (\ref{eq:nout6}) satisfies Eq. (\ref{eq:n}). The proof is completed.
\end{proof}

\subsection{Output intensity in steady state} \label{sec:photon-steady}

In this subsection, we compute the steady-state   output intensity when the system $G$ is driven by an $m$-channel photon input field $\vert\Psi_{\nu}\rangle$ in the class $\mathcal{F}_0$ (Eq. (\ref{class_F0})).

The following is the main result of this subsection, whose proof is given in the Appendix.

\begin{theorem}  \label{thm:n-out-2}
The steady-state  output intensity of the output fields of the system $G$ driven by the $m$-channel single-photon input field $\vert\Psi_{\nu}\rangle$ in the class $\mathcal{F}_0$ (Eq. (\ref{class_F0})) is given by
\begin{equation}
\label{eq:out}
\bar{n}_{out}(t)= \int_0^\infty g_{G^{+}}(r)^{\#}g_{G^{+}}(r)^{T}dr+%
\xi_{out}^{+}(t)^{\#}\xi^{+}_{out}(t)^{T}+\xi^{-}_{out}(t)^{\#}\xi^{-}_{out}(t)^{T} ,
\end{equation}
where
\begin{equation}\label{xi_out_pm}
\xi^-_{out}(t) =   \int_{-\infty}^\infty g_{G^-} (t-r) \xi_{in}^-(r) dr, \ \
\xi^+_{out}(t) =   \int_{-\infty}^\infty g_{G^+} (t-r) \xi_{in}^-(r)^\# dr.
\end{equation}
In particular, the total output intensity is given by   
\begin{equation}
\label{eq:tr1}
\mathrm{Tr}[\bar{n}_{out}(t)] = \sum_{j,k=1}^m \int_0^\infty |g_{G^+}^{jk} (t)|^2 dt
+ \sum_{j,k=1}^m | \xi_{out}^{+, jk}(t)|^2  +  \sum_{j,k=1}^m | \xi_{out}^{-, jk} (t)|^2,
\end{equation}
where $g_{G^+}^{jk}(t)$ is the element of $g_{G^+}$ on the $j$th row and $k$th column. The same applies to $g_{G^-}^{jk}(t)$, $\xi_{out}^{-, jk}$, and $\xi_{out}^{+, jk}$.
\end{theorem}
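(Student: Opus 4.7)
My plan is to obtain the steady-state intensity by specializing Theorem~\ref{thm:covariance-transfer} to the multichannel photon input and reading off the lower right block of the resulting output covariance at coincident times. Since the doubled-up output covariance is $R_{out}(t,r) = \mathrm{Tr}[\rho \breve b_{out}(t) \breve b_{out}^\dag(r)]$, its $(2,2)$ block equals $\mathrm{Tr}[\rho b_{out}^{\#}(t) b_{out}^T(r)]$, so by definition (\ref{eq:nout6}) the intensity $\bar n_{out}(t)$ is simply the $(2,2)$ block of $R_{out}(t,t)$. This reduces the theorem to a linear computation built entirely from the impulse response $g_G$, and sidesteps the need to propagate the time-domain ODEs of Theorem~\ref{thm:n_out} to $t_0 \to -\infty$.

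First I would compute the input covariance $R_{in}(t,r)$ for the tensor product state $\vert\Psi_\nu\rangle \in \mathcal{F}_0$. Using $b_j(t)\vert 1_{\nu_k}\rangle = \delta_{jk}\nu_k(t)\vert 0\rangle$ together with the fact that each single-photon factor has zero mean for both $b_j$ and $b_j^\ast$, the cross-channel expectations drop out and one obtains the $m$-channel generalization of (\ref{eq:R-photon}), which splits cleanly as $R_{in}(t,r) = R_{in}^{vac}(t,r) + R_{in}^{ph}(t,r)$, with $R_{in}^{vac}(t,r) = \delta(t-r)\,\mathrm{diag}(I_m, 0_m)$ and $R_{in}^{ph}$ the block-diagonal rank-one kernel built from the diagonal pulse-shape matrix $\xi_{in}^-$. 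Substituting into (\ref{eq:covariance-transfer}) and using linearity gives $R_{out}(t,r) = R_{out}^{vac}(t,r) + R_{out}^{ph}(t,r)$.

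Next I extract the $(2,2)$ block of each piece at $r = t$. Expanding $g_G = \Delta(g_{G^-}, g_{G^+})$ and multiplying out, the $(2,2)$ entry of $g_G(t-\tau)\,\mathrm{diag}(I_m, 0_m)\,g_G(t-\tau)^\dag$ collapses to $g_{G^+}(t-\tau)^\# g_{G^+}(t-\tau)^T$; the change of variable $u = t - \tau$ together with the causal support of $g_{G^+}$ yields $\int_0^\infty g_{G^+}(u)^\# g_{G^+}(u)^T\,du$. For the photon kernel, the diagonality of $\xi_{in}^-$ makes $\xi_{in}^-(\tau_1)$ and $\xi_{in}^-(\tau_2)$ commute, so the double integral factorizes into products of single integrals which, by the definitions in (\ref{xi_out_pm}), are precisely $\xi_{out}^+(t)^\# \xi_{out}^+(t)^T$ and $\xi_{out}^-(t)^\# \xi_{out}^-(t)^T$. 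Summing these three contributions delivers (\ref{eq:out}), and the trace identity (\ref{eq:tr1}) then follows from $\mathrm{Tr}(X^\# X^T) = \sum_{j,k} \vert X_{jk}\vert^2$ applied entry by entry.

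The main technical obstacle will be keeping the Dirac singularities under control at $r = t$: both $R_{in}^{vac}$ and $g_G$ (through the $\delta(t) S_-$ component of $g_{G^-}$) carry $\delta$-functions, and one must check that after the matrix multiplication these singular contributions do not land in the $(2,2)$ block. The decisive observation is that $g_{G^+}$ has no $\delta$ component, so the $(2,2)$ block of $g_G\,\mathrm{diag}(I_m, 0_m)\,g_G^\dag$ is the purely regular kernel $g_{G^+}^\# g_{G^+}^T$; all the dangerous $\delta$'s are confined to the $(1,1)$ and off-diagonal blocks. This makes the quoted convolution integral finite and the intensity a bona fide function of $t$, and also explains why it is $g_{G^+}$ (rather than $g_{G^-}$) that governs the vacuum contribution to $\bar n_{out}$.
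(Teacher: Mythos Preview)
Your approach is correct and genuinely different from the paper's. The paper's proof works directly from the input--output relation (\ref{eq:out_tf}): it expands $b_{out}^\#(t)b_{out}^T(t)$ into four convolution-type terms, shows that the two mixed $g_{G^+}^\#\,g_{G^-}$ terms vanish on $\vert\Psi_\nu\rangle$, and evaluates the remaining two by explicit single-photon expectations such as $\langle\Psi_\nu\vert b_j(\tau)b_k^\ast(\tau')\vert\Psi_\nu\rangle = \delta_{jk}\delta(\tau-\tau') + \delta_{jk}\nu_j(\tau)\nu_j(\tau')^\ast$, then sends $t_0\to -\infty$. Your route instead packages these same expectations into the input covariance $R_{in}$ of (\ref{eq:R_photon_in}), invokes the covariance transfer formula (\ref{eq:covariance-transfer}) once, and reads off the $(2,2)$ block of $R_{out}(t,t)$; this is essentially the computation the paper carries out later in Section~\ref{sec:photon-steady-covariance} (Eq.~(\ref{eq:covariance-transfer_photon})), specialized to $r=t$. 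What your approach buys is economy and a transparent link between the intensity and the full output covariance; what the paper's direct computation buys is independence from Theorem~\ref{thm:covariance-transfer} and, as a byproduct, the finite-$t_0$ transient formula (\ref{eq:out2_temp}).

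Two small remarks. First, the factorization of the photon part of the double integral does not actually hinge on the diagonality or commutativity of $\xi_{in}^-$: the kernel $\Delta(\xi_{in}^-(\tau_1),0)\Delta(\xi_{in}^-(\tau_2),0)^\dag$ is already a product $F(\tau_1)G(\tau_2)$ of matrix functions, so Fubini and associativity suffice. Second, the paper's Appendix proof does not in fact go through the ODEs of Theorem~\ref{thm:n_out}; it too works from the impulse-response representation, so the contrast you draw is with a direct four-term expansion rather than with an ODE limit. Your handling of the $\delta$-singularities is on point: since $g_{G^+}$ carries no $\delta$, the $(2,2)$ block of the vacuum contribution is regular, which is exactly why the steady-state intensity is finite.
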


For the single input case, the steady-state output intensity is given by
\begin{equation}
\bar{n}_{out}(t)=\int_0^\infty \vert g_{G^{+}}(r)\vert^{2}dr+\left\vert \int_{-\infty}^\infty g_{G^{+}}(t-r)\xi_{in}^-(r)^\ast dr\right\vert ^{2}+\left\vert
\int_{-\infty}^\infty g_{G^{-}}(t-r)\xi_{in}^-(r)dr\right\vert^{2}.
\label{eq:out2}
\end{equation}

\begin{remark}
{\rm For the single input case, if $t_0 = 0$, in the Appendix we have shown that substitution of Eqs. (\ref{eq:interm1}), (\ref{eq:interm1b}), (\ref{eq:interm1d}) and (\ref{eq:interm1e}) into Eq. (\ref{eq:nout7}) yields
\begin{equation}
\bar{n}_{out}(t)=\int_0^t\left\vert g_{G^{+}}(r)\right\vert
^{2}dr+\left\vert \int_0^t(g_{G^{+}}(t-r)\xi^-(r)^\ast dr\right\vert ^{2}+\left\vert
\int_0^tg_{G^{-}}(t-r)\xi^-(r)dr\right\vert^{2}.  \label{eq:out2_temp}
\end{equation}
Assuming further that $C_+ = \Omega_+ = \Omega_- =0, C_- =\sqrt{\kappa_a}$,   and $\xi^-(t)$ is $\nu(t)$ in \cite[Eq. (18)]{MNM10}, then Eq. (\ref{eq:out2_temp}) reduces to \cite[Eq. (26)]{MNM10}.
}
\end{remark}

\subsection{Output covariance function in steady state}
\label{sec:photon-steady-covariance}

If the input field is a multichannel photon field state $\vert \Psi_\nu \rangle$ as defined in Eq. (\ref{eq:multichannel}), then it is easy to show that the input covariance function is
\begin{equation}\label{eq:R_photon_in}
R_{in}(t,r) = \mathbb{E}_{\Psi_\nu}[\breve{b}(t)\breve{b}^\dag(r)]= R_0(t-r) + \Delta(\xi_{in}^-(t),0)\Delta(\xi_{in}^-(r),0)^\dag ,
\end{equation}
where $\xi_{in}^-(t)$ is given in Eq. (\ref{xi_in_photon}) and $R_0$ is the vacuum covariance defined by (\ref{R_in_vacuum}). According to Theorem \ref{thm:covariance-transfer}, the output covariance function is
\begin{equation}\label{eq:covariance-transfer_photon}
R_{out}(t,r) = \int_{-\infty}^\infty g_{G}(t-\tau)   \mathrm{diag}(I_m, 0_m)
g_G(r-\tau)^\dag d\tau + \Delta(\xi_{out}^-(t), \xi_{out}^+(t))\Delta(\xi_{out}^-(r), \xi_{out}^+(r))^\dag ,
\end{equation}
with $\xi_{out}^-$ and $\xi_{out}^+$ given by Eq. (\ref{xi_out_pm}). Clearly
\begin{equation}
R_{in}(t,r) = R_{in}(r,t)^\dag, ~~ R_{out}(t,r) = R_{out}(r,t)^\dag .
\end{equation}

\emph{Example 1.} (Optical cavity) An optical cavity $G$ is a single open oscillator \cite{GZ00,BR04,WM08} with $\Omega_- =\omega \in \mathbb{R}$, $\Omega_+ = 0$, $C_- =\sqrt{\kappa }$, $C_+ =0$ \cite[section IV. B.]{GJN10a}). Let the pulse shape of the single-photon input field state $\vert 1_\nu\rangle$ be given by
\begin{equation} \label{eq:shape}
\nu (t)=\left\{
\begin{array}{ll}
\sqrt{2\gamma }e^{-\gamma t}, & t\geq 0, \\
0, & t<0.
\end{array}%
\right.
\end{equation}
The state  $\vert 1_\nu\rangle$ can describe a single-photon field emitted from an optical cavity with damping rate $\sqrt{2\gamma}$. The input covariance function is given by (\ref{eq:R-photon}). On the other hand by Eq. (\ref{xi_out_pm}),
\begin{equation} \label{xi_out_minus}
\xi_{out}^-(t)
= \left\{
\begin{array}{ll}
   \sqrt{2\gamma} e^{-\gamma t} -\frac{\kappa \sqrt{2\gamma}}{\frac{\kappa}{2}+i\omega-\gamma}\left(e^{-\gamma t}-e^{-(\frac{\kappa}{2}+i\omega)t}\right), & t \geq 0 \\
  0, & t <0
\end{array}
  \right. , ~~ \xi_{out}^+(t) \equiv 0 .
\end{equation}
Write
\begin{equation}
\chi(t,r) =  \left\{
 \begin{array}{ll}
   -\kappa e^{-(\frac{\kappa}{2}+i\omega)(t-r)}, & t>r \\
    \delta(t-r),     & t=r \\
   -\kappa e^{-(\frac{\kappa}{2}+i\omega)(r-t)},  & t<r
 \end{array}
 \right. .
\end{equation}
It can be checked that the steady-state output correlation function is thus
\begin{equation}
R_{out}(t,r) =\chi(t,r) \left[\begin{array}{cc}
                                   1 & 0 \\
                                   0 & 0
                                 \end{array}
 \right] + \Delta(\xi_{out}^-(t), 0)\Delta(\xi_{out}^-(r), 0)^\dag .
\end{equation}
Clearly, the mean input intensity is $\bar{n}_{in}(t) = \vert\nu(t)\vert^2$. By Theorem \ref{thm:n-out-2}, the steady-state mean output intensity is
$
\bar{n}_{out}(t) = |\xi_{out}^-(t)|^2 .
$
According to Theorem \ref{thm:n_out}, for $t \geq 0$,
\begin{equation}
m_-(t) = \frac{-\sqrt{2\kappa\gamma}}{\frac{k}{2}-\gamma-i\omega}\left[ \begin{array}{cc}
                                                                          e^{-\gamma t}-e^{-(\frac{\kappa}{2}-i\omega)t} & 0
                                                                        \end{array}
 \right], m_+(t) = \frac{-\sqrt{2\kappa\gamma}}{\frac{k}{2}-\gamma+i\omega}\left[\begin{array}{cc}
                                                                         0 & e^{-\gamma t}-e^{-(\frac{\kappa}{2}+i\omega)t}
                                                                       \end{array}
 \right].
\end{equation}
Hence, the evolution of system variables covariance $\Sigma_\nu(t)$ given in Eq. (\ref{eq:m}) can be expressed explicitly. In particular, when $\omega = 0$,
\begin{equation}
\Sigma_\nu(t) = \left[\begin{array}{cc}
                         1  &  0\\
                         0  &  0
                  \end{array}
 \right]+\frac{2 \gamma  \kappa }{(\frac{\kappa}{2} - \gamma )^2} (e^{-\frac{\kappa}{2} t}-e^{-\gamma t})^2 I_2,  ~~ t \geq 0.
\end{equation}

%

\emph{Example 2.} (Degenerate parametric amplifier) A degenerate parametric amplifier (DPA) is an open oscillator that is able to produce squeezed output fields \cite{GZ00,BR04,WM08}. A model for a DPA is \cite[pp. 220 and Chapter 10]{GZ00}
\begin{eqnarray}
\dot{\breve{a}}(t) & =& -\frac{1}{2}\left[
\begin{array}{cc}
\kappa & -\epsilon \\
-\epsilon & \kappa%
\end{array}
\right] \breve{a}(t)-\sqrt{\kappa}\breve{b}(t),  ~~ \breve{a}(0) = \breve{a}, \nonumber  \\
\breve{b}_{out}(t) & =& \sqrt{\kappa}\breve{a}(t)+\breve{b}(t), \ \  ( 0<\epsilon<\kappa ).
\end{eqnarray}
If the single-photon input $\vert 1_\nu \rangle$ has the pulse shape defined in Eq. (\ref{eq:shape}), then by Eq. (\ref{xi_out_pm}),
\begin{equation} \label{DPA_xi-}
\xi_{out}^-(t) = \left\{\begin{array}{ll}
                          \sqrt{2\gamma }e^{-t \gamma }+ \frac{2\kappa \sqrt{2\gamma} e^{-\frac{\kappa}{2}t} \left(e^{-\frac{2\gamma -\kappa}{2} t} (2 \gamma -\kappa )-(2\gamma-\kappa ) \cosh\frac{t\epsilon}{2}+\epsilon\sinh\frac{t\epsilon}{2}\right)}{(\kappa -2 \gamma -\epsilon ) (\kappa -2 \gamma +\epsilon )} , & t \geq 0 \\
                          0, & t<0
                        \end{array} \right.
 ,
\end{equation}
\begin{equation} \label{DPA_xi+}
\xi_{out}^+(t) =\left\{\begin{array}{ll}
 \frac{2 \kappa \sqrt{2\gamma} e^{-\frac{t\kappa}{2}}\left(-\epsilon e^{-\frac{2\gamma-\kappa}{2}t} +\epsilon  \cosh\frac{t\epsilon}{2}-(2\gamma-\kappa) \sinh\frac{t\epsilon}{2}\right)}{(\kappa -2\gamma -\epsilon) (\kappa -2 \gamma +\epsilon )}, & t\geq 0 \\
 0, & t<0
                        \end{array} \right. .
\end{equation}

When the system is initialized at time $t_0 = 0$, by Eq. (\ref{eq:out2_temp}), the mean output intensity is
\begin{equation}
\bar{n}_{out}(t)  =\kappa^{2}\int_{0}^t e^{-\kappa r}\sinh^{2}%
\frac{\epsilon r}{2}dr + \vert \xi_{out}^-(t) \vert^{2}+\vert \xi_{out}^+(t) \vert^{2}, ~~ t \geq 0 .
\end{equation}

Since $\vert \Psi_{in}\rangle = B^\ast(\nu)\vert 0\rangle$, the covariance function corresponding to $\vert \Phi_{R_{in}}\rangle$ (the vacuum field state) is $R_0(\tau)$ in Eq. (\ref{R_in_vacuum}). According to Theorem \ref{thm:covariance-transfer}, the output covariance function is
\begin{equation} \label{eq:dpa_R_out}
    R_{out,g}(\tau) = \left\{ \begin{array}{ll}
                                \kappa R_a(\tau) -\kappa e^{A\tau}\mathrm{diag}(1,0)  +\mathrm{diag}(1,0)\delta(\tau),  & \tau > 0 \\
                                \kappa R_a(\tau) -\frac{1}{2}\kappa e^{A\tau}\mathrm{diag}(1,0)  - \frac{1}{2}\kappa \mathrm{diag}(1,0)  e^{A^\dag \tau} + \mathrm{diag}(1,0) \delta(\tau), & \tau = 0 \\
                                \kappa R_a(\tau) -\kappa \mathrm{diag}(1,0)  e^{-A^\dag \tau}+\mathrm{diag}(1,0) \delta(\tau),  & \tau < 0
                              \end{array}
      \right. ,
\end{equation}
where
\begin{equation}
R_a(\tau) = \left\{\begin{array}{ll}
                     e^{A\tau}\Upsilon , & \tau >0  \\
                     \Upsilon , & \tau = 0  \\
                     \Upsilon e^{-A^\dag \tau} , & \tau < 0
                   \end{array}
  \right.,  ~~~ \Upsilon =\kappa  \int_0^\infty   e^{At}\left[\begin{array}{cc}
                                              1 & 0 \\
                                              0 & 0
                                            \end{array}
 \right]  e^{A^\dag t}dt .
\end{equation}
As a result, according to Eq. (\ref{eq:covariance-transfer_photon}), the covariance function of the output field of the DPA driven by the single-photon input field $\vert 1_\nu \rangle$ can be expressed as
\begin{equation}
R_{out}(t,r) = R_{out,g}(t-r) + \Delta(\xi_{out}^-(t), \xi_{out}^+(t))\Delta(\xi_{out}^-(r), \xi_{out}^+(r))^\dag ,
\end{equation}
where $R_{out,g}(t-r)$ is given in Eq. (\ref{eq:dpa_R_out}).

\begin{remark}
In the cavity example, the output correlation function has the same form as the single photon correlation function (\ref{eq:R-photon}), while this is not the case for the degenerate parametric amplifier. Thus the steady-state output state may in some cases be a single photon state, while in other cases more complex states may result. A general class of output states is the subject of the following section.
\end{remark}

\section{Photon-Gaussian states}
\label{sec:state-photon}

Since the states produced as the outputs of linear quantum systems need not necessarily be photon states, we consider a larger class of states that has the property that if a state in this class is input to a linear quantum system, then in steady state the output state is also in this class. This class of states is defined and studied in Section \ref{sec:pgS}. However, the expressions for specifying these states are quite complicated, and so in Section \ref{sec:photon-steady-output-state} we consider the simpler single channel case for pedagogical reasons. The calculations involved in determining the output states make use of the stable, but non-causal, inversions discussed in Section \ref{sec:identity}.

\subsection{The one channel case} \label{sec:photon-steady-output-state}

Given an initial joint system-field state $\rho_{0g}=\left\vert \phi\right\rangle \left\langle \phi\right\vert
\otimes\left\vert 0\right\rangle \left\langle 0\right\vert$, denote
\begin{equation}\label{eq:rho_inf_g}
\rho_{\infty g} = \lim_{t\rightarrow\infty,t_{0}\rightarrow-\infty}U\left(
t,t_{0}\right)  \rho_{0g}U\left(t,t_{0}\right) ^{\ast}.
\end{equation}
Define
\begin{equation}\label{eq:rho_field}
\rho_{field,g}=\mathrm{Tr}_{sys}[\rho_{\infty g}],
\end{equation}
where the subscript ``sys'' indicates that the trace operation is with respect to the system.  According to Theorem \ref{thm:spectral-transfer},  $\rho_{field,g}$ is the steady-state output field density with covariance function $R_{out}[i\omega]$ given in Eq. (\ref{eq:R-transfer}).

We are in a position to prove a result concerning the output field of quantum linear systems driven by single-photon states.

\begin{proposition}   \label{prop:out-state-photon}  
Let $m=1$ and suppose the input state $\rho_{in} = \vert 1_\nu \rangle\langle 1_\nu \vert$ is a single photon state. Then the steady-state output field state for the linear quantum system $G$ is given by
\begin{equation}
\rho_{out} = ( B^\ast( \xi^-_{out} ) - B(\xi^+_{out}) ) \rho_{field,g} ( B^\ast( \xi^-_{out} ) - B(\xi^+_{out}) )^\ast
\label{eq:out-state-photon}
\end{equation}
where
\begin{equation}
\Delta ( \xi^-_{out}[s] , \xi^+_{out}[s] ) =  \Xi_{G}[s] \Delta ( \nu[s] , 0 ) ,
\end{equation}
and $\rho_{field,g}$, defined in Eq. (\ref{eq:rho_field}), is the steady-state density operator for the output field with zero mean and covariance function
\begin{equation} \label{eq:R_out}
R_{out}[i\omega] =  \Xi_G[ i\omega] R_{in}[i\omega]  \Xi_G[i\omega]^\dag ,
\end{equation}
where $R_{in}[i\omega]$ is the Fourier transform of $R_{0}(\tau)$ defined in Eq. (\ref{R_in_vacuum}).
\end{proposition}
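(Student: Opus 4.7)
The plan is to derive the claim by expressing the input single-photon creation operator $B^\ast(\nu)$ in terms of the \emph{output} field operators via the stable inversion of Lemma \ref{lem:G_inv}, and then pushing the resulting operators through the Heisenberg-to-Schr\"odinger intertwining supplied by the asymptotic unitary $U$.

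First I would invert the steady-state transfer relation $\breve{b}_{out}[s] = \Xi_G[s]\breve{b}[s]$ to get $\breve{b}[s] = \Xi_G[s]^{-1}\breve{b}_{out}[s]$, then take two-sided inverse Laplace transforms to write $b(t)$ and $b^\ast(t)$ as convolutions of $b_{out}$ and $b_{out}^\ast$ against the entries of the (anticausal) kernel $g_{G^{-1}}(t)$ given by Lemma \ref{lem:G_inv}. Multiplying by $\nu(t)$, integrating over $t$, and exchanging the order of integration identifies the resulting kernels with the pulses $\xi_{out}^\pm$ defined in Eq.~(\ref{xi_out_pm}), yielding the central identity
\begin{equation}\label{eq:prop_key}
B^\ast(\nu) = B_{out}^\ast(\xi_{out}^-) - B_{out}(\xi_{out}^+).
\end{equation}
Some care with conjugation is required, since e.g.\ $\int \nu(t)g_{G^+}(r-t)^\ast\,dt$ reduces to $\xi_{out}^+(r)^\ast$ (not $\xi_{out}^+(r)$), and because $g_{G^{-1}}$ is supported on $t\le 0$ one must work in the asymptotic regime $t_0\to -\infty$, $t\to\infty$ from the outset.

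Next, because $b_{out}(t) = U^\ast b(t) U$ for each $t$ and $U$ (the asymptotic M{\o}ller-type unitary) is independent of $t$, the intertwining
\begin{equation*}
U\, B_{out}^\ast(\xi) = B^\ast(\xi)\, U, \qquad U\, B_{out}(\xi) = B(\xi)\, U
\end{equation*}
holds for any pulse $\xi$. Combining with the factorisation $|\phi 1_\nu\rangle = B^\ast(\nu)|\phi 0\rangle$ and the identity (\ref{eq:prop_key}) gives
\begin{equation*}
U|\phi 1_\nu\rangle = \bigl(B^\ast(\xi_{out}^-) - B(\xi_{out}^+)\bigr)\,U|\phi 0\rangle.
\end{equation*}

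Finally, I would form the outer product $U|\phi 1_\nu\rangle\langle\phi 1_\nu|U^\ast$ and take the partial trace over the system. Since $B^\ast(\xi_{out}^-)$ and $B(\xi_{out}^+)$ act only on the field Fock space, they commute with $\mathrm{Tr}_{sys}$ and can be pulled outside; what remains in the middle is $\mathrm{Tr}_{sys}[U\rho_{0g}U^\ast] = \mathrm{Tr}_{sys}[\rho_{\infty g}] = \rho_{field,g}$ by definitions (\ref{eq:rho_inf_g})--(\ref{eq:rho_field}). The stated covariance (\ref{eq:R_out}) for $\rho_{field,g}$ then follows immediately from Theorem \ref{thm:spectral-transfer} applied to the vacuum input $R_0$ of (\ref{R_in_vacuum}). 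The main obstacle I anticipate is the careful bookkeeping in the first step: one must verify that the convolutions built from the anticausal inverse kernel precisely reproduce the pulses $\xi_{out}^\pm$ prescribed by $\Delta(\xi_{out}^-[s],\xi_{out}^+[s]) = \Xi_G[s]\Delta(\nu[s],0)$, which requires tracking all conjugates in the doubled-up formalism and justifying the exchange of integrals in the asymptotic limit.
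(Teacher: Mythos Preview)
Your proposal is correct and follows essentially the same strategy as the paper: use the stable inverse $g_{G^{-1}}$ of Lemma~\ref{lem:G_inv} to rewrite $B^\ast(\nu)$ in terms of output-type operators with the shifted pulses $\xi_{out}^\pm$, then conjugate by the asymptotic unitary and take the partial system trace. The only cosmetic difference is that you first derive the operator identity $B^\ast(\nu)=B_{out}^\ast(\xi_{out}^-)-B_{out}(\xi_{out}^+)$ and then apply the intertwining $U B_{out}^{\,\sharp}(\xi)=B^{\,\sharp}(\xi)U$, whereas the paper introduces $b^-(r,-\infty)=U b(r)U^\ast$ and establishes $\breve b^- = g_{G^{-1}}\!*\breve b$ directly; these are the same computation in two frames related by conjugation with $U$.
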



\begin{proof}
The initial joint system-field density is
\[
\rho_{0}=\left\vert \phi\right\rangle \left\langle \phi\right\vert
\otimes\left\vert 1_{\nu}\right\rangle \left\langle 1_{\nu}\right\vert
=B^{\ast}(\nu)\rho_{0g}B(\nu).
\]
 The steady-state joint system-field density, denoted by $\rho_{\infty}$,  is%
\begin{align}
\rho_{\infty} &  =\lim_{t\rightarrow\infty,t_{0}\rightarrow-\infty}U\left(
t,t_{0}\right)  \rho_{0}U\left(  t,t_{0}\right)  ^{\ast} \label{rho_infty}\\
&  =\lim_{t\rightarrow\infty,t_{0}\rightarrow-\infty}U\left(  t,t_{0}\right)
B^{\ast}(\nu)\rho_{0g}B(\nu)U\left(  t,t_{0}\right)  ^{\ast} \nonumber \\
&  =\lim_{t\rightarrow\infty,t_{0}\rightarrow-\infty}U\left(t,t_{0}\right)
B^{\ast}(\nu)U\left(t,t_{0}\right)^{\ast}\rho_{\infty,g}\lim_{t\rightarrow\infty,t_{0}\rightarrow-\infty
}U\left(  t,t_{0}\right)  B(\nu)U\left(  t,t_{0}\right)^{\ast}, \nonumber
\end{align}%
where $\rho_{\infty,g}$ is given in Eq. (\ref{eq:rho_inf_g}). Now we find expressions for the other terms.
\begin{eqnarray}
&& \lim_{t\rightarrow\infty,t_{0}\rightarrow-\infty}U\left(t,t_{0}\right)
B^{\ast}(\nu)U\left(t,t_{0}\right)^{\ast} \nonumber   \\
&=& \lim_{t\rightarrow\infty,t_{0}\rightarrow-\infty}U\left(t,t_{0}\right)
\int_{-\infty}^\infty \nu(r) I_{sys} \otimes b^\ast(r)dr U\left(t,t_{0}\right)^{\ast} \nonumber  \\
&=&  \lim_{t\rightarrow\infty,t_{0}\rightarrow-\infty}U\left(t,t_{0}\right)
\int_{t_0}^t \nu(r) I_{sys} \otimes b^\ast(r)dr U\left(t,t_{0}\right)^{\ast} \nonumber \\
&=& I_{sys} \otimes \int_{-\infty}^\infty  \nu(r) b^{-\ast}(r,-\infty) dr
\end{eqnarray}
where $b^-(t,t_0) = U(t,t_0) b(t) U^\ast(t,t_0)$.
Now
\begin{equation}
\breve b_{out}(t) = U^\ast(t,t_0) \breve b(t) U(t,t_0)
=Ce^{A(t-t_0)}\breve{a} + \int_{t_0}^t g_{G}(t-r) \breve b(r) dr
\end{equation}
and so
\begin{equation}
\breve b(t) = Ce^{A(t-t_0)}\breve{a}+  \int_{t_0}^t g_{G}(t-r) \breve b^-(r,t_0) dr.
\end{equation}
Now send $t_0 \to -\infty$ to obtain
\begin{eqnarray}
\breve b(t) &=& \int_{-\infty}^t g_{G}(t-r) \breve b^-(r,-\infty) dr =  \int_{-\infty}^\infty g_{G}(t-r) \breve b^-(r,-\infty) dr .
\end{eqnarray}
Next, using the stable inverse of $G$ (Lemma \ref{lem:G_inv}, Eq. (\ref{eq:G-inv})) this implies
\begin{eqnarray}
\breve b^-(t,-\infty) &=& \int_{-\infty}^\infty g_{G^{-1}} (t-r) \breve b (r)dr
\end{eqnarray}
Therefore
\begin{equation}\label{eq:b_out_ss}
 \lim_{t\rightarrow\infty,t_{0}\rightarrow-\infty}U\left(t,t_{0}\right)
B^{\ast}(\nu)U\left(t,t_{0}\right)^{\ast}  = I_{sys}\otimes ( B^\ast( \xi^-_{out} ) - B(\xi^+_{out}) ),
\end{equation}
using Eq. (\ref{eq:G-inv}) and the definitions of $\xi^-_{out}(t)$ and $\xi^+_{out}(t)$ in Eq. (\ref{xi_out_pm}).

By Eqs. (\ref{rho_infty}) and (\ref{eq:b_out_ss}),
\begin{equation}
\rho_{\infty} = (I_{sys}\otimes ( B^\ast( \xi^-_{out} ) - B(\xi^+_{out}) ))\rho_{\infty,g} (I_{sys}\otimes ( B^\ast( \xi^-_{out} ) - B(\xi^+_{out}) ))^\ast
\end{equation}
This, together with Eq.  (\ref{eq:rho_field}),
\begin{eqnarray}
\rho_{out} & =&  \mathrm{Tr}_{sys}[\rho_{\infty} ] \nonumber\\
& =& \mathrm{Tr}_{sys}[(I_{sys}\otimes ( B^\ast( \xi^-_{out} ) - B(\xi^+_{out}) ))\rho_{\infty,g} (I_{sys}\otimes ( B^\ast( \xi^-_{out} ) - B(\xi^+_{out}) ))^\ast] \nonumber \\
&=&( B^\ast( \xi^-_{out} ) - B(\xi^+_{out}) ) \rho_{field,g} ( B^\ast( \xi^-_{out} ) - B(\xi^+_{out}) )^\ast. \nonumber
\end{eqnarray}
Eq. (\ref{eq:out-state-photon}) is thus established.
\end{proof}

The output state $\rho_{out}$  given by (\ref{eq:out-state-photon}) is determined by a matrix $\xi_{out} = \Delta ( \xi^-_{out} , \xi^+_{out} ) $ of functions which is obtained by convolving an input matrix $\xi_{in} = \Delta ( \nu , 0) $ with the system $G$, and a Gaussian state $\rho_{field,g}$ whose covariance $R_{out}$ is given by the usual transfer relation (\ref{eq:R-transfer}) where $R_{in}$ is the covariance function for the vacuum field.
It will be shown by Proposition \ref{prop:normalization} in Section \ref{sec:state-photon} that $\rho_{field,g}$ in Eq. (\ref{eq:out-state-photon}) is indeed normalized.

\emph{Example 3.} Refer to \emph{Example 1} on optical cavities. By Eq. (\ref{R_out_vac}), $\rho_{field,g}$ is a vacuum state $|0\rangle\langle0|$. By Proposition \ref{prop:out-state-photon}, the steady-state output field state is a pure state $\left\vert\Psi_{out}\right\rangle = B^\ast(\xi_{out}^-)\vert0\rangle$, where $\xi_{out}^-$  is given in Eq. (\ref{xi_out_minus}). Clearly, the output is in a single-photon state.

\emph{Example 4.} Refer to \emph{Example 2} on degenerate parametric amplifiers. According to Proposition \ref{prop:out-state-photon}, the steady-state output state is $\rho_{out} = (B^\ast(\xi_{out}^-)- B(\xi_{out}^+))\rho_{field,g}(B^\ast(\xi_{out}^-)- B(\xi_{out}^+))^\ast$, where $\xi_{out}^-, \xi_{out}^+$ are given in Eqs. (\ref{DPA_xi-})-(\ref{DPA_xi+}), and the covariance function $R_{out}$ of $\rho_{field,g}$  is given by Eq. (\ref{eq:dpa_R_out}). The normalization condition for $\rho_{out}$  will be given by Eq. (\ref{eq:F_1}). It can be verified that indeed  $\mathrm{Tr}[\rho_{out}] =1$. Clearly, $\rho_{out}$ is not a single-photon state. 

\subsection{Photon-Gaussian States}
\label{sec:pgS}

In the previous section, in particular \emph{Example 4}, we saw that a quantum linear system produces a somewhat complicated output state $\rho_{out}$ from a single photon input state.  This output state was determined by pulse shapes and a Gaussian state. In this section we abstract the form of this output state and define a class $\mathcal{F}$ of {\em photon-Gaussian} states. The class $\mathcal{F}$ contains single-photon states studied in \cite{Milburn08} and \cite{ZJ11b} as special cases.
Furthermore, in Theorem \ref{thm:main} we show that this class of states is invariant under the steady-state action of a linear quantum system, that is, $\rho_{in} \in \mathcal{F}$ implies $\rho_{out} \in \mathcal{F}$.

We first introduce some notations.  Given $t_1,\ldots, t_j \in \mathbb{C}$ and $\xi_1,\ldots,\xi_j \in L^2(\mathbb{C},\mathbb{C}^{2m})$ where $j$ is an arbitrary positive integer, define
\begin{eqnarray}
M_{\xi}^-(t_{1\rightarrow j}) &:=&  \xi_1(-t_1)\otimes_{c}\cdots\otimes_{c}\xi_j(-t_j) , \nonumber \\
M_{\xi}(t_{1\rightarrow j}) &:=&  \xi_1(t_1)\otimes_{c}\cdots\otimes_{c}\xi_j(t_j) , \nonumber \\
M_{\xi}^+(t_{1\rightarrow j}) &:=&  \xi_j(t_1)\otimes_{c}\cdots\otimes_{c}\xi_1(t_j) ,  \label{eq:M_xi}
\end{eqnarray}
where $\otimes_{c}$ is the Kronecker product. Similarly, for the operators $\breve{b}(t)$, define
\begin{equation}\label{eq:M_b}
    M_{\breve{b}}(t_{1\rightarrow j}) :=  \breve{b}(t_1)\otimes_{c}\cdots\otimes_{c}\breve{b}(t_j) .
\end{equation}
Finally for a matrix $A$, let $A^{\otimes_c^j} := A\otimes_{c}\cdots\otimes_{c}A$
be a $j$-way Kronecker tensor product. Clearly, when $j=1$, $A^{\otimes_c^1} = A$. Define $\xi(t) := \Delta(\xi^-(t), \xi^+(t)) \in \mathbb{C}^{2m \times 2m}$
with entries $\xi_{jk}^-$ and $\xi_{jk}^+$ ($j,k=1,\ldots,m$) for matrices $\xi^-(t), \xi^+(t)$ respectively. Let $\rho_R$ be a zero mean stationary Gaussian field state with correlation function $R(\tau)$.

The following equation will be used in Definition \ref{def:F}.
\begin{equation}\label{eq:innerproduct}
\underbrace{\int_{-\infty}^\infty\cdots\int_{-\infty}^\infty}_{2m} (M_{\xi}^+(t_{1\rightarrow m})^\#\otimes_{c}M_{\xi}(t_{m+1\rightarrow 2m}))^T J^{\otimes_{c}^m}\otimes_{c} \Theta^{\otimes_{c}^m}\mathrm{Tr}[\rho_R M_{\breve{b}}(t_{1\rightarrow 2m})] dt_1\ldots dt_{2m} =1,
\end{equation}
where $\rho_R$ is a zero-mean Gaussian state with covariance function $R$.

\begin{definition}\label{def:F}
A state $\rho_{\xi, R}$ is said to be a \emph{photon-Gaussian} state if it belongs to the set
\begin{eqnarray}
\mathcal{F} &:=& \left\{\rho_{\xi, R} = \prod\limits_{k=1}^{m}\sum_{j=1}^{m}\left(B_j^\ast (\xi_{jk}^-) -B_j(\xi_{jk}^+) \right)\rho_R\left(\prod\limits_{k=1}^{m}\sum_{j=1}^{m}\left(B_j^\ast (\xi_{jk}^-) -B_j(\xi_{jk}^+) \right)\right)^\ast \right. \nonumber\\
& & \ \ \ \  \ \ \ \ \ \ \  \left.  :  \xi \mathrm{~and~} \rho_R \mathrm{~satisfy ~ Eq.~} (\ref{eq:innerproduct})   \right\}. \label{class_F}
\end{eqnarray}
\end{definition}


\begin{proposition}\label{prop:normalization}
The photon-Gaussian states   $\rho_{\xi, R} \in \mathcal{F}$ are normalized: $\mathrm{Tr}[\rho_{\xi, R}] = 1$.
\end{proposition}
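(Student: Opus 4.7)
The plan is to write $\rho_{\xi,R}=L\rho_R L^\ast$, where
\[
L := \prod_{k=1}^{m}\sum_{j=1}^{m}\bigl(B_j^\ast(\xi_{jk}^-)-B_j(\xi_{jk}^+)\bigr),
\]
use cyclicity of the trace in the form $\mathrm{Tr}[L\rho_R L^\ast]=\mathrm{Tr}[L^\ast L\rho_R]$, and recognize the result by direct expansion as the left-hand side of Eq. (\ref{eq:innerproduct}), which is stipulated to equal $1$.

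First I would substitute the integral representations $B_j^\ast(\xi)=\int \xi(r)b_j^\ast(r)dr$ and $B_j(\xi)=\int \xi(r)^\ast b_j(r)dr$ into each factor, and use $(L_1\cdots L_m)^\ast=L_m^\ast\cdots L_1^\ast$ to turn $L^\ast L$ into a $2m$-fold integral over times $t_1,\ldots,t_{2m}$. Assigning $t_1,\ldots,t_m$ to the factors $L_m^\ast,\ldots,L_1^\ast$ in that left-to-right order and $t_{m+1},\ldots,t_{2m}$ to $L_1,\ldots,L_m$, the $2m$ field operators assemble, after being rewritten in terms of the doubled-up vector $\breve b(t)$, into the tensor $M_{\breve b}(t_{1\to 2m})$ of Eq. (\ref{eq:M_b}). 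The $L$-block, in its natural column order $k=1,\ldots,m$, reproduces $M_\xi(t_{m+1\to 2m})$; the $L^\ast$-block, whose order of factors is reversed and whose pulse-shape entries are conjugated, reproduces exactly the variant $M_\xi^+(t_{1\to m})^\#$ defined in Eq. (\ref{eq:M_xi}). Contracting the coefficient against $M_{\breve b}$ produces the factor $(M_\xi^+(t_{1\to m})^\#\otimes_c M_\xi(t_{m+1\to 2m}))^T$. The sign separating $B_j^\ast(\xi_{jk}^-)$ from $B_j(\xi_{jk}^+)$ within each $L_k$ accounts for a $J_m$ per tensor slot, yielding $J^{\otimes_c^m}$ on the $L^\ast$-block, while the swap between creation and annihilation entries that appears when converting scalar field operators into the doubled-up $\breve b$ accounts for $\Theta^{\otimes_c^m}$ on the $L$-block. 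Taking the expectation against $\rho_R$ replaces $M_{\breve b}(t_{1\to 2m})$ by $\mathrm{Tr}[\rho_R M_{\breve b}(t_{1\to 2m})]$, and the resulting $2m$-fold integral is precisely the quantity in Eq. (\ref{eq:innerproduct}); the hypothesis then yields $\mathrm{Tr}[\rho_{\xi,R}]=1$.

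The main obstacle is the combinatorial bookkeeping of four intertwined structural ingredients: the order reversal produced by $L^\ast=L_m^\ast\cdots L_1^\ast$, the entrywise conjugation inside each $L_k^\ast$, the $\pm$ sign arising from the difference $B_j^\ast(\xi^-)-B_j(\xi^+)$, and the creation/annihilation reshuffle that embeds scalar operators into $\breve b$. Once these are tracked consistently and matched against the definitions of $M_\xi$, $M_\xi^+$, and $M_{\breve b}$, the identification with the left-hand side of Eq. (\ref{eq:innerproduct}) is immediate. Notably, no Wick expansion of the Gaussian moment $\mathrm{Tr}[\rho_R M_{\breve b}(t_{1\to 2m})]$ is needed, since the normalization condition (\ref{eq:innerproduct}) is imposed precisely so that $\rho_{\xi,R}$ has unit trace; the proposition amounts to verifying that the two kernel expressions coincide.
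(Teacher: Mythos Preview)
Your proposal is correct and follows essentially the same route as the paper: write $\rho_{\xi,R}=L\rho_R L^\ast$, cycle the trace to $\mathrm{Tr}[\rho_R L^\ast L]$, express each factor $L_k$ (resp.\ $L_k^\ast$) as $\int \xi^k(t)^T\Theta\breve b(t)\,dt$ (resp.\ $\int \xi^k(t)^\dag J\breve b(t)\,dt$), and then collapse the $2m$-fold product via Kronecker identities into the left-hand side of Eq.~(\ref{eq:innerproduct}). The paper makes the intermediate representation $L_k=\int \xi^k(t)^T\Theta\breve b(t)\,dt$ explicit before tensoring, whereas you describe the bookkeeping more narratively, but the content is the same.
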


\begin{proof}
Partition $\xi^-,\xi^+,\xi$ to be
\begin{equation}
\xi^- = [\xi^{-,1} \ \cdots \ \xi^{-,m}], \ \xi^+ = [\xi^{+,1} \ \cdots \ \xi^{+,m}], \ \xi = [\xi^1 \ \cdots \ \xi^{2m}] .
\end{equation}
It is not hard to show that $\rho_{\xi,R} \in \mathcal{F}$ is in the form of
\begin{align}
&\mathrm{Tr}[\rho_{\xi, R}] \label{eq:psi_in_7} \\
=&  \mathrm{Tr}\left[\prod_k\int_{-\infty}^\infty \xi^k(t)^T \Theta\breve{b}(t) dt \rho_R  \left(\prod_k\int_{-\infty}^\infty \xi^k(t)^T \Theta\breve{b}(t) dt\right)^\ast\right] \nonumber \\
=& \mathrm{Tr}\left[\rho_R \left(\underbrace{\int_{-\infty}^\infty\cdots\int_{-\infty}^\infty}_m M_{\xi}^+(t_{1\rightarrow m})^\dag J^{\otimes_c^m}M_{\breve{b}}(t_{1\rightarrow m})  dt_1\cdots dt_m\right)^\ast  \right. \nonumber \\
 &  \ \ \ \left. \times \underbrace{\int_{-\infty}^\infty\cdots\int_{-\infty}^\infty}_m M_{\xi}(t_{1\rightarrow m})^T \Theta^{\otimes_c^m}M_{\breve{b}}(t_{1\rightarrow m})dt_1\cdots dt_m  \right] \nonumber\\
=&\underbrace{\int_{-\infty}^\infty\cdots\int_{-\infty}^\infty}_{2m} (M_{\xi}^+(t_{1\rightarrow m})^\#\otimes_{c}M_{\xi}(t_{m+1\rightarrow 2m}))^T J^{\otimes_{c}^m}\otimes_{c} \Theta^{\otimes_{c}^m}\mathrm{Tr}[\rho_R M_{\breve{b}}(t_{1\rightarrow 2m})] dt_1\ldots dt_{2m},  \nonumber
\end{align}
where $\Theta$ is as introduced in the Notation part of the Introduction section.
Thus, that $\mathrm{Tr}[\rho_{\xi,R}] =1$ is equivalent to that Eq. (\ref{eq:innerproduct}) holds. The proof is completed.
\end{proof}

Note that when $m=1$, a state $\rho_{\xi,R} \in \mathcal{F}$ is
\begin{equation}\label{eq:Psi_2}
\rho_{\xi,R} = ( B^\ast(\xi^-)- B(\xi^+))\rho_R ( B^\ast(\xi^-)- B(\xi^+))^\ast ,
\end{equation}
and Eq. (\ref{eq:innerproduct}) reduces to
\begin{equation}\label{eq:F_1}
\int_{-\infty}^\infty \int_{-\infty}^\infty [\xi^-(t)^\ast \ -\xi^+(t)] \mathrm{Tr}[\rho_R \breve{b}(t)\breve{b}^\dag(r)] \left[ \begin{array}{c}
                                                            \xi^-(r) \\
                                                            -\xi^+(r)^\ast
                                                          \end{array}
 \right]dtdr = 1 .
\end{equation}

We are now ready to state the main result of this section. The proof is given in the Appendix.

\begin{theorem}   \label{thm:main}
 Let  $\rho_{\xi_{in}, R_{in}}  \in \mathcal{F}$ be a photon-Gaussian input state. Then the linear quantum system $G$ produces in steady state a photon-Gaussian output state
$\rho_{\xi_{out}, R_{out}} \in \mathcal{F}$, where
\begin{eqnarray}
 \xi_{out}[s]  &=&  \Xi_{G}[s] \xi_{in}[s] ,  \label{eq:xi_out} \\
R_{out}[i\omega] &=&  \Xi_G[ i\omega] R_{in}[i\omega]  \Xi_G[i\omega]^\dag  . \label{eq:R_out_gnr}
\end{eqnarray}
\end{theorem}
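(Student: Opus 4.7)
The plan is to generalize the argument used in Proposition \ref{prop:out-state-photon} from the single channel case to the multichannel photon-Gaussian setting, using the stable inverse of $G$ (Lemma \ref{lem:G_inv}) and the spectral transfer relation (Theorem \ref{thm:spectral-transfer}). First, I would write the initial joint system--field state as
\begin{equation*}
\rho_0 = |\phi\rangle\langle\phi| \otimes \rho_{\xi_{in},R_{in}}
= \mathcal{L}_{in}\bigl(|\phi\rangle\langle\phi| \otimes \rho_{R_{in}}\bigr) \mathcal{L}_{in}^\ast,
\end{equation*}
where $\mathcal{L}_{in} = \prod_{k=1}^m \sum_{j=1}^m \bigl(B_j^\ast(\xi_{jk,in}^-) - B_j(\xi_{jk,in}^+)\bigr)$ is the creation/annihilation operator string specifying the photon-Gaussian state. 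Since $\mathcal{L}_{in}$ acts only on the field, it commutes with $I_{sys}$ and we can pull it outside the Schr\"odinger evolution in the same way as in the proof of Proposition \ref{prop:out-state-photon}.

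Second, I would evaluate $\rho_\infty = \lim_{t\to\infty, t_0 \to -\infty} U(t,t_0)\rho_0 U(t,t_0)^\ast$ by inserting $U^\ast U = I$ between $\mathcal{L}_{in}$ factors. The key computation is to show that for each summand
\begin{equation*}
\lim_{t\to\infty, t_0\to-\infty} U(t,t_0)\bigl(B_j^\ast(\xi_{jk,in}^-) - B_j(\xi_{jk,in}^+)\bigr) U(t,t_0)^\ast
= I_{sys} \otimes \bigl(B_j^\ast(\xi_{jk,out}^-) - B_j(\xi_{jk,out}^+)\bigr),
\end{equation*}
with the pulse shapes $\xi_{out}^\pm$ given by the transfer rule (\ref{eq:xi_out}). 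This is exactly the multichannel analogue of (\ref{eq:b_out_ss}). To establish it, I would use the identity $\breve{b}(t) = \int g_G(t-r)\breve b^-(r,-\infty)dr$ together with the stable inverse $g_{G^{-1}}$ from Lemma \ref{lem:G_inv} to express $\breve b^-(r,-\infty) = \int g_{G^{-1}}(r-\tau)\breve b(\tau)d\tau$, so that
\begin{equation*}
\int \nu(r)\, b_j^{-\ast}(r,-\infty)\, dr
\end{equation*}
becomes a linear combination of $B_j^\ast(\xi_{jk,out}^-)$ and $B_j(\xi_{jk,out}^+)$; taking two-sided Laplace transforms then gives $\Delta(\xi_{out}^-[s],\xi_{out}^+[s]) = \Xi_G[s]\Delta(\xi_{in}^-[s],\xi_{in}^+[s])$, i.e.\ equation (\ref{eq:xi_out}).

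Third, I would handle the Gaussian kernel. Let $\mathcal{L}_{out}$ denote the analogous operator built from $\xi_{out}^\pm$. The previous step yields
\begin{equation*}
\rho_\infty = (I_{sys} \otimes \mathcal{L}_{out})\bigl(\rho_{\infty,g}\bigr) (I_{sys} \otimes \mathcal{L}_{out})^\ast,
\end{equation*}
where $\rho_{\infty,g}$ is the steady-state joint state generated from $|\phi\rangle\langle\phi|\otimes\rho_{R_{in}}$. Tracing out the system and invoking Theorem \ref{thm:spectral-transfer}, the field marginal of $\rho_{\infty,g}$ is a zero-mean Gaussian state with covariance $R_{out}[i\omega] = \Xi_G[i\omega] R_{in}[i\omega] \Xi_G[i\omega]^\dag$, giving (\ref{eq:R_out_gnr}). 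Combining the two, $\rho_{out} = \mathrm{Tr}_{sys}[\rho_\infty] = \mathcal{L}_{out}\, \rho_{R_{out},g}\, \mathcal{L}_{out}^\ast$, which has exactly the form required for membership in $\mathcal{F}$.

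Finally, I would verify normalization. Since $U(t,t_0)$ is unitary, $\mathrm{Tr}[\rho_\infty] = \mathrm{Tr}[\rho_0] = 1$, hence $\mathrm{Tr}[\rho_{out}] = 1$; by Proposition \ref{prop:normalization} this is equivalent to the compatibility condition (\ref{eq:innerproduct}) for $(\xi_{out}, R_{out})$, so $\rho_{\xi_{out},R_{out}} \in \mathcal{F}$. The main obstacle I anticipate is the bookkeeping in step two: the multichannel product $\prod_k \sum_j$ involves nested field operators for possibly overlapping channels, and one must track how the stable inverse couples the $``-"$ and $``+"$ components through the doubled-up transfer function $\Xi_G[s] = \Delta(\Xi_{G^-}[s],\Xi_{G^+}[s])$. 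This is where the flat-unitary property (\ref{eq:unitary}) and the explicit form of $g_{G^{-1}}$ in (\ref{eq:G-inv}) must be used carefully so that the creation part $B_j^\ast(\xi_{jk,out}^-)$ and annihilation part $B_j(\xi_{jk,out}^+)$ emerge with the correct signs dictated by the definition of $\mathcal{F}$.
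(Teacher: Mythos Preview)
Your approach matches the paper's for the structural part of the proof: conjugate by $U$, use the stable inverse $g_{G^{-1}}$ from Lemma \ref{lem:G_inv} to express $\breve b^-(r,-\infty)$ in terms of $\breve b$, read off $\xi_{out}=\Xi_G\,\xi_{in}$, and invoke Theorem \ref{thm:spectral-transfer} for $R_{out}$. One small slip: the displayed identity in your step two is not correct for a fixed channel index $j$, because the stable inverse mixes channels; only the summed version $\sum_j(\,\cdot\,)\mapsto\sum_j(\,\cdot\,)$ holds (this is what the paper's computation in fact establishes, the output $\xi_{out}^{\pm,jk}$ being the $(j,k)$ entry of the matrix convolution $g_{G}\ast\xi_{in}$). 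You clearly anticipate this in your closing paragraph, so it is bookkeeping, not a gap. The genuine difference from the paper is the normalization step. The paper devotes roughly half its proof to checking condition (\ref{eq:innerproduct}) for $(\xi_{out},R_{out})$ directly: it rewrites the normalization integral in the frequency domain and uses the flat-unitary identities $\Xi_G[i\omega]^\dag J\,\Xi_G[i\omega]=J$ and $\Xi_G[-i\omega]^T\Theta\,\Xi_G[i\omega]=\Theta$ to reduce it to the input normalization. Your route---unitarity of $U(t,t_0)$ gives $\mathrm{Tr}[\rho_{out}]=\mathrm{Tr}[\rho_0]=1$, hence by Proposition \ref{prop:normalization} the pair $(\xi_{out},R_{out})$ satisfies (\ref{eq:innerproduct})---is shorter and conceptually cleaner; the paper's explicit computation has the compensating merit of exhibiting precisely which algebraic property of $\Xi_G$ is responsible for preserving the normalization.
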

Without confusion, we may use the shorthand $\rho_{out}$ for $\rho_{\xi_{out}, R_{out}}$.

\begin{remark} \label{rem:pure}
{
When the input state is a pure state and the system is passive (e.g., an optical cavity or a beamsplitter), it can be seen from Theorem \ref{thm:main} that in steady state the output field is in a pure state. That is
\[
\rho_{\xi_{out}, R_{out}}  = \vert\Psi_{\xi_{out}, R_{out}}\rangle\langle\Psi_{\xi_{out}, R_{out}}\vert.
\]
In this case we use $\vert\Psi_{\xi_{out}, R_{out}}\rangle$ to denote the steady-state output field state. Again, without confusion, we may use the shorthand $|\Psi_{out}\rangle$ for $\vert\Psi_{\xi_{out}, R_{out}}\rangle$.
}
\end{remark}

We remark that the Gaussian part of the specification of photon-Gaussian states  is needed to allow for quantum linear systems with active elements, such as degenerate parametric amplifiers, \cite{GZ00,WM08,WM10}. In general, while passive devices produce vacuum from vacuum, active devices produce nontrivial Gaussian states from vacuum, and photon-Gaussian states from single photon states, as show in Example 3.

This result provides a complete description of how quantum linear systems process highly non-classical photon-Gaussian states. In particular, the result provides the output response to single photon inputs. The result may be used for dynamical analysis, or for synthesis. Indeed, one could contemplate generalizations of the synthesis results given in \cite{Milburn08} and in Section \ref{sec:synthesis}  to the class of photon-Gaussian states.

\emph{Example 5.} (Beam splitter) Beam splitters are archetype of static and passive quantum optical instruments \cite{BR04}, which can be modeled as
$b_{out}(t) =  S_- b(t)$ where
\begin{equation}\label{BS}
S_- = \left[  \begin{array}{cc}
                \sqrt{\eta} & \sqrt{1-\eta}  \\
                -\sqrt{1-\eta} & \sqrt{\eta}
              \end{array}
  \right]
\end{equation}
with $0\leq \eta \leq 1$. Let $\xi_{in}^- = \mathrm{diag}(\nu_1, \nu_2), ~~ \xi_{in}^+ = 0$. According to Theorem \ref{thm:main} and Remark \ref{rem:pure},
\begin{eqnarray}
&& \vert\Psi_{out}\rangle   \nonumber \\
&=& \sqrt{\eta (1-\eta)} \int_{-\infty}^{\infty} \nu_1   (t) b_1^* (t)dt \int_{-\infty}^{\infty} \nu_2   (t) b_1^* (t)dt |0_1\rangle \otimes  |0_2\rangle   \nonumber\\
&& + \eta \int_{-\infty}^{\infty} \nu_1   (t) b_1^* (t)dt  |0_1\rangle \otimes  \int_{-\infty}^{\infty} \nu_2   (t) b_2^* (t)dt |0_2\rangle  \nonumber \\
&& - (1-\eta) \int_{-\infty}^{\infty} \nu_2   (t) b_1^* (t)dt |0_1\rangle \otimes \int_{-\infty}^{\infty} \nu_1   (t) b_2^* (t)dt |0_2\rangle   \nonumber \\
& & -\sqrt{\eta (1-\eta)} |0_1\rangle \otimes  \int_{-\infty}^{\infty} \nu_1   (t) b_2^* (t)dt  \int_{-\infty}^{\infty} \nu_2   (t) b_2^* (t)dt  |0_2\rangle . \label{eq:BS_Psi}
\end{eqnarray}
In particular, when $\eta = 1/2$ and $\nu_1   (t) \equiv \nu_2   (t), \forall t \in \mathbb{R}$, the steady-state output state is
\begin{equation}
\vert\Psi_{out}\rangle =  \frac{1}{2} \left ( \int_{-\infty}^{\infty} \nu_1   (t) b_1^* (t)dt \right)^2 |0_1\rangle \otimes  |0_2\rangle - \frac{1}{2} |0_1\rangle \otimes  \left ( \int_{-\infty}^{\infty} \nu_1   (t) b_2^* (t)dt \right)^2  |0_2\rangle .
\end{equation}%
In this case, the two photons can not exit from distinct output arms of the beam splitter.
These results are consistent with the results of the  calculations in \cite[Sec. 16.4.2]{WM08}.

\emph{Example 6.} (Linear quantum systems driven by both a single-photon state and a coherent state)  The methods used in this paper may easily be adapted to treat multichannel input fields where some channels are coherent states while others are single photons. Consider a linear quantum system driven by a single-photon state $\vert 1_\nu\rangle$ (channel 1) and a coherent state $\vert\alpha\rangle$ (channel 2) simultaneously, Fig.~\ref{fig:pc}.

\begin{figure}
\centering
\includegraphics[width=2.5in]{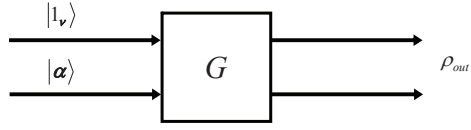}\\
\caption{A quantum linear stochastic system $G$ driven by a single-photon state $\vert 1_\nu\rangle$ and a coherent state $\vert\alpha\rangle$ simultaneously.}
\label{fig:pc}
\end{figure}

Denote the pure input state by $
\vert\Psi\rangle = \vert 1_\nu\rangle \otimes \vert\alpha\rangle = B_1^\ast(\nu) \vert 0\rangle \otimes \vert\alpha\rangle $. It is easy to show that the correlation function for the composite state $\vert 0\rangle \otimes \vert\alpha\rangle$ is
\begin{equation}\label{R_input_2}
R_{in}(t,r) =\langle 0\alpha\vert \breve{b}(t)\breve{b}^\dag(r) \vert 0\alpha\rangle
=\left[
\begin{array}{cccc}
  \delta(t-r) & 0                                           & 0 & 0 \\
  0               & \delta(t-r)+\alpha^\ast(r)\alpha(t) & 0 & \alpha(t)\alpha(r) \\
  0               & 0                                           & 0 & 0 \\
  0               & \alpha^\ast(t)\alpha^\ast(r)            & 0 & \alpha^\ast(t)\alpha(r)
\end{array}
\right] .
\end{equation}
Define
\begin{equation}\label{tilde_xi}
\xi_{out}^{-,j}(t) =\int_{-\infty}^\infty g_{G^-}^{j1}(t-r)\nu(r)dr, ~~ \xi_{out}^{+,j}(t) = \int_{-\infty}^\infty g_{G^+}^{j1}(t-r)\nu(r)^\ast dr, ~~ j=1,2.
\end{equation}
By Theorem \ref{thm:main}, it is not hard to show that the steady-state output state of the system driven by $\vert\Psi\rangle$ is
\begin{equation}\label{tilde_Psi}
\rho_{out} = \left( \sum_{j=1}^2 (B_j^\ast(\xi_{out}^{-,j})-B_j(\xi_{out}^{+,j}))\right) \rho_{field,g}  \left( \sum_{j=1}^2 (B_j^\ast(\xi_{out}^{-,j})-B_j(\xi_{out}^{+,j}))\right)^\ast,
\end{equation}
where  $\rho_{field,g}$ has covariance function $R_{out}$ given by Eq. (\ref{eq:covariance-transfer}).

\section{Photon  shape synthesis} \label{sec:synthesis}

In this section we generalize the result of photon wavepacket  shape synthesis results presented in \cite{Milburn08}. We consider a class of passive linear quantum systems for which $C_{+}=0$ and $\Omega _{+}=0$. From Eq. (\ref{system-a}) it is easy to see that%
\begin{equation}
A^{\dag}+A+C^{\dag}C =0, \ S^{\dag}C+(BS)^{\dag}=0.
\end{equation}%
In this case $\Xi_G[s]$ defined in Eq. (\ref{eq:G_omega}) is $\Xi_G[s]=\Delta(\Xi_{G^-}[s],0)$. By Proposition \ref{prop:identity},
\begin{equation}
\Xi_G[-s^\ast]^\dag\Xi_G[s]=I, \ \forall s \in \mathbb{C} .
\end{equation}%
That is, $\Xi_G[s]$ is \emph{all-pass} \cite[pp. 357]{ZDG96}.

In what follows we study the following {\em pulse shaping} problem. Given two pulse shapes $\nu(t)$ and $\nu_{out}(t)$ satisfying $\int_{-\infty}^\infty \vert\nu(t)\vert^2 dt = \int_{-\infty}^\infty \vert\nu_{out}(t)\vert^2 dt=1$, let %
$\nu_{out}[s]$ and $\nu[s]$ be the Laplace transform of $\nu_{out}(t)$ and $\nu(t)$ respectively. If $\nu_{out}[s]$ and $\nu[s]$ satisfy
\begin{equation}\label{eq:nu_xi}
|\nu_{out}[i\omega]|^2 = |\nu[i\omega]|^2,  \ \forall \omega \in \mathbb{R},
\end{equation}
we show that, under mild conditions, there is an all-pass system which maps the input state $\vert 1_{\nu}\rangle$  to produce the output field state  $\vert 1_{\nu_{out}}\rangle$.

Define
\begin{equation}
\Xi_{G_d}[s] = \frac{\nu_{out}[s]}{\nu[s]} .
\end{equation}

We make the following assumptions on $\Xi_{G_d}[s]$:

\textbf{Assumptions}
\begin{enumerate}
\item $\Xi_{G_d}[s]$ is real-rational and has a state-space realization \cite[Chapter 3]{ZDG96}
\begin{equation}
\Xi_{G_d}[s] = \left[
\begin{array}{c|c}
A & B \\ \hline
C & D%
\end{array}
\right] .
\end{equation}

\item $\Xi_{G_d}[s]$ is Hurwitz stable.

\item The above state space realization is minimal.

\item $D=I$.
\end{enumerate}

\begin{theorem} \label{thm:shape}
If the given input and output  pulse shapes satisfy Eq. (\ref{eq:nu_xi}), and $\Xi_{G_d}[s]$
satisfies Assumptions 1) - 4), then there exists a linear quantum stochastic system $\Xi_G[s]=\Delta(\Xi_{G_d}[s],0)$
solving the pulse shaping problem:
\begin{equation}
S_{-}=I,C_{-}=C,~C_{+}=0,~\Omega_{-}=\frac{i}{2}\left( XA-A^{\dag
}X\right) ,~\Omega_{+}=0.
\label{eq:pulse-shaper}
\end{equation}
\end{theorem}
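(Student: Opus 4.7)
The plan is to reduce the pulse shaping problem to finding a physically realizable all-pass transfer function that equals $\Xi_{G_d}[s]$, and then apply Theorem~\ref{thm:main} to conclude the wavepacket is mapped correctly. The starting observation is that the hypothesis $|\nu_{out}[i\omega]|^2 = |\nu[i\omega]|^2$, together with the normalization of both $\nu$ and $\nu_{out}$, forces $|\Xi_{G_d}[i\omega]|^2 = 1$ for every $\omega \in \mathbb{R}$. Under Assumptions 1)--4), this says $\Xi_{G_d}[s]$ is a rational inner (all-pass) function with $D = I$, given by a minimal Hurwitz realization $(A,B,C,I)$.

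Next I would invoke the standard state-space characterization of inner functions (KYP / bounded real lemma applied to the inner condition $\Xi_{G_d}[-s^\ast]^\dag \Xi_{G_d}[s] = I$): under minimality and stability with $D = I$, there exists a unique Hermitian positive definite $X$ simultaneously solving the Lyapunov equation $XA + A^\dag X + C^\dag C = 0$ and the coupling identity $XB + C^\dag = 0$. With this $X$ fixed, I define $\Omega_- := \tfrac{i}{2}(XA - A^\dag X)$, which is automatically Hermitian because $XA - A^\dag X$ is skew-Hermitian. I then set $S_- = I$, $C_- = C$, $C_+ = 0$, $\Omega_+ = 0$ as in (\ref{eq:pulse-shaper}). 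The point is that the quantum system matrix $A_{\text{phys}} = -\tfrac{1}{2}C_-^\dag C_- - iJ_n\Delta(\Omega_-,0)$ and the quantum input coupling $B_{\text{phys}} = -C^\flat$ recover, after a similarity transformation determined by $X^{1/2}$, a balanced (physically realizable) state-space realization of the same transfer function $\Xi_{G_d}[s]$. Concretely, the new $a$-part realization inherits $\tilde A + \tilde A^\dag + \tilde C^\dag \tilde C = 0$ and $\tilde B = -\tilde C^\dag$ from the Lyapunov and coupling relations, so it defines a legitimate passive quantum linear system with $\Xi_{G^-}[s] = \Xi_{G_d}[s]$ and $\Xi_{G^+}[s] = 0$. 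Hence $\Xi_G[s] = \Delta(\Xi_{G_d}[s],0)$.

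Having produced such a passive quantum system, I finish by applying Theorem~\ref{thm:main} and Remark~\ref{rem:pure}. Because the input state $|1_\nu\rangle$ is pure and the constructed $G$ is passive, the steady-state output is a pure single-photon state $|1_{\nu'}\rangle$ with wavepacket determined by $\Delta(\nu'[s],0) = \Xi_G[s]\Delta(\nu[s],0)$, that is, $\nu'[s] = \Xi_{G_d}[s]\nu[s] = \nu_{out}[s]$. Parseval's theorem together with the all-pass property gives $\int |\nu'(t)|^2 dt = \int |\nu(t)|^2 dt = 1$, consistent with the normalization of $\nu_{out}$. The main obstacle is Step~3: checking carefully that the choice $\Omega_- = \tfrac{i}{2}(XA - A^\dag X)$ combined with $C_- = C$ really yields a quantum system whose $(-)$-transfer function equals $\Xi_{G_d}[s]$. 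This requires applying the similarity transformation $T = X^{1/2}$ to put the abstract realization into the balanced form in which the passive quantum realizability identities $\tilde A + \tilde A^\dag + \tilde C^\dag \tilde C = 0$ and $\tilde B = -\tilde C^\dag$ hold, and then verifying that the Hamiltonian matrix expressed in the original coordinates is precisely $\tfrac{i}{2}(XA - A^\dag X)$; minimality, Hurwitz stability, and the normalization $D = I$ are all essential at this step.
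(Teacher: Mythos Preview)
Your overall strategy matches the paper's: establish that $\Xi_{G_d}$ is inner, extract the positive definite Lyapunov solution $X$ together with the coupling identity $XB+C^\dag=0$, and then invoke a quantum realizability argument. The paper compresses the last two steps into citations (Corollary~13.30 of \cite{ZDG96} and Theorem~5.1 of \cite{MP11a}), while you try to carry them out directly; you also append the verification via Theorem~\ref{thm:main} that the wavepacket is mapped from $\nu$ to $\nu_{out}$, which the paper leaves implicit.

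The gap is exactly where you suspect. If you stay in the canonical framework of Section~\ref{sec:lqss} (so $[a_j,a_k^\ast]=\delta_{jk}$) and plug in $C_-=C$, $\Omega_-=\tfrac{i}{2}(XA-A^\dag X)$, the $a$-block system matrix becomes $-\tfrac12 C^\dag C - i\Omega_- = XA$ (use the Lyapunov identity), while the input matrix is $-C^\dag$. The resulting transfer function is $I - C(sI-XA)^{-1}C^\dag$, which equals $\Xi_{G_d}[s]=I-C(sX-XA)^{-1}C^\dag$ only when $X=I$. Your balancing transformation $T=X^{1/2}$ does give a legitimate canonical-CCR passive realization of $\Xi_{G_d}$, but its parameters are $\tilde C_-=CX^{-1/2}$ and $\tilde\Omega_-=\tfrac{i}{2}(X^{1/2}AX^{-1/2}-X^{-1/2}A^\dag X^{1/2})$, not those in (\ref{eq:pulse-shaper}). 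The parameters (\ref{eq:pulse-shaper}) are correct in the framework of \cite{MP11a}, where the oscillator modes carry a generalized commutation matrix (here $X^{-1}$) rather than the identity; that is precisely what Theorem~5.1 there supplies and what the paper is invoking. So either cite that framework explicitly, or present your balanced parameters as the realization---both solve the pulse shaping problem, but only the former reproduces the formula stated in the theorem.
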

\begin{proof}
By Assumption 2, there exits a matrix $X = X^\dag \geq0$ such that%
\begin{equation}
A^{\dag}X+XA+C^{\dag}C=0.  \label{eq:obsv}
\end{equation}
In fact, by Assumption 3, $X>0$. Consequently, by Corollary 13.30 in \cite{ZDG96},
\begin{equation} \label{eq:passivity}
D^{\dag}C+B^{\dag}X  =0,   ~~  D^{\dag}D  =I.
\end{equation}
Eqs. (\ref{eq:obsv}) and (\ref{eq:passivity}) can be rewritten as%
\begin{equation}
X^{-1}A^{\dag}+AX^{-1}+BB^{\dag}  =0,  ~~ B = -X^{-1}C^{\dag} .
\end{equation}
Then by Theorem 5.1 in \cite{MP11a}, $\Xi_{G_d}[s]$ can be implemented by an all-pass  linear
quantum stochastic system with parameters given in equation (\ref{eq:pulse-shaper}).
This completes the proof.
\end{proof}

\begin{figure}
\centering
\includegraphics[width=4in]{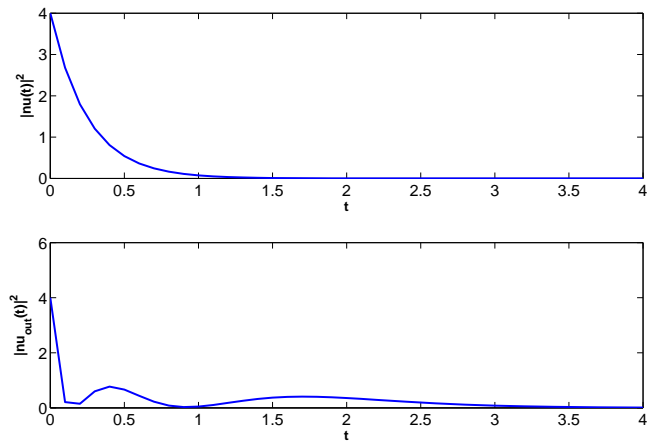}\\
\caption{The upper plot is for $\nu(t)$ and the lower plot is for $
\nu_{out}(t)$. The horizontal axes are time $t$, while the vertical axes are $\vert \nu(t) \vert^2$ (the upper) and $\vert \nu_{out}(t) \vert^2$  (the lower) respectively.}
\label{fig:shape}
\end{figure}

\emph{Example 7. }  Consider the following two functions $\nu$ given by Eq. (\ref{eq:shape})  and
\begin{equation}\label{exam:nu_out}
   \nu_{out}(t) = \left \{ \begin{array}{ll}
                 2 \left(\frac{54+12i}{5}
   e^{-3 t}-5(2+i) e^{-2 t}+\frac{1+13i}{5}
   e^{(-1-i) t}\right), &  t \geq 0 \\
                 0, & t<0
               \end{array}
               \right. .
\end{equation}
It can be checked that $\nu$ and $\nu_{out}$ satisfy Eq. (\ref{eq:nu_xi}). The shapes of $\nu(t)$ and $\nu_{out}(t)$ are plotted in Fig.~\ref{fig:shape}.  Note that
\begin{equation}
\frac{\nu_{out}[s]}{\nu[s]}=\frac{s-3}{s+3}\bullet \frac{%
s-(1-i)}{s+(1+i)}.
\end{equation}
In the language of $(S_-,C_-,C_+,\Omega_-,\Omega_+)$ as discussed in Sec.~\ref{sec:lqss}, by Theorem \ref{thm:shape}, the transfer function $\frac{s-3}{s+3}$ corresponds to $G_{1}=\left(1,\sqrt{6},0,0,0\right) $, while
the transfer function $\frac{s-(1-i)}{s+(1+i)}$ corresponds $%
G_{2}=\left( 1,\sqrt{2},0,1,0\right) $. Therefore, the whole system that transfers $\nu$ to $\nu_{out}$ is a cascaded system made of $G_1$ and $G_2$ \cite[Definition 5.3]{GJ09}.

\section{Conclusions}
 \label{sec:conclusion}

In this paper we have investigated the response of linear quantum systems driven by multi-channel photon input fields. Results concerning  the intensity and correlations  and states of output fields have been presented. In particular we have defined the class of photon-Gaussian states which arise when quantum optical systems with active components are driven by multi-channel photon input fields. Examples from quantum optics have been used to illustrate the results presented. Future work will include application of the results to specific problems in quantum technology.

\section*{Acknowledgment}

The authors wish to thank H. Nurdin for his very helpful discussions and suggestions.

\appendix

\section*{Proofs} \label{sec:appendix}

\emph{Proof of Theorem \ref{thm:n-out-2}:~} According to Eq. (\ref{eq:out_tf}),
\begin{align}
\bar{n}_{out}(t) & =\left\langle  \Psi_\nu ,\int_{t_0}^t g_{G^{+}}(t-\tau)^{\#}b(\tau)d\tau\left( \int_{t_0}^t g_{G^{-}}(t-\tau) b(\tau)d\tau\right)^{T}  \Psi_\nu\right\rangle \label{eq:nout7}
\\
&  +\left\langle  \Psi_\nu  , \int_{t_0}^t g_{G^{+}}(t-\tau)^{\#}b(\tau)d\tau\left(\int_{t_0}^t g_{G^{+}} (t-\tau) b^{\#}(\tau)d\tau\right)^{T}  \Psi_\nu\right\rangle   \nonumber
\\
& +\left\langle\Psi_\nu  , \int_{t_0}^t g_{G^{-}}(t-\tau)^{\#} b^{\#}(\tau)d\tau \left(  \int_{t_0}^t g_{G^{-}}(t-\tau) b(\tau)d\tau \right)^{T} \Psi_\nu\right\rangle  \nonumber
\\
&+\left\langle\Psi_\nu  , \int_{t_0}^t g_{G^{-}}(t-\tau)^{\#} b^{\#}(\tau)d\tau \left(\int_{t_0}^t g_{G^{+}} (t-\tau) b^{\#}(\tau)d\tau\right)^{T}  \Psi_\nu\right\rangle.  \nonumber
\end{align}
In what follows we evaluate each term of the right-hand side of Eq. (\ref{eq:nout7}). Firstly, it is easy to see that%
\begin{eqnarray}
\left\langle  \Psi_\nu ,\int_{t_0}^t g_{G^{+}}(t-\tau)^{\#}b(\tau)d\tau\left( \int_{t_0}^t g_{G^{-}}(t-\tau) b(\tau)d\tau\right)^{T}  \Psi_\nu\right\rangle &=&0, \label{eq:interm1}
\\
\left\langle\Psi_\nu  , \int_{t_0}^t g_{G^{-}}(t-\tau)^{\#} b^{\#}(\tau)d\tau \left(\int_{t_0}^t g_{G^{+}} (t-\tau) b^{\#}(\tau)d\tau\right)^{T}  \Psi_\nu\right\rangle &=& 0. \label{eq:interm1b}
\end{eqnarray}
Secondly, note that%
{\small
\begin{align}
& \left\langle\Psi_\nu  , \int_{t_0}^t g_{G^{+}}(t-\tau)^{\#}b(\tau)d\tau\left(\int_{t_0}^t g_{G^{+}} (t-\tau) b^{\#}(\tau)d\tau\right)^{T}  \Psi_\nu\right\rangle \label{eq:interm1c}
 \\
& =\left\langle\Psi_\nu, \left[
\begin{array}{c}
\sum_{j=1}^{m}\int_{t_0}^{t}g_{G^+}^{1j}(t-\iota)^\ast b_{j}(\iota
)d\iota \\
\vdots \\
\sum_{j=1}^{m}\int_{t_0}^{t}g_{G^+}^{mj}(t-\iota)^\ast b_{j}(\iota)d%
\iota%
\end{array}
\right] \left[
\begin{array}{ccc}
\sum_{j=1}^{m}\int_{t_0}^{t}g_{G^+}^{1j}(t-\iota)b_{j}^{\ast}(%
\iota)d\iota & \cdots & \sum_{j=1}^{m}\int_{t_0}^{t}g_{G^+}^{mj}(t-%
\iota)b_{j}^{\ast}(\iota)d\iota%
\end{array}
\right]\Psi_\nu\right\rangle . \nonumber
\end{align}
}
For given $k$ and $l$ ($k,l=1,\ldots,m$), we have
\begin{align}
& \left\langle  \Psi_\nu  ,\sum_{j=1}^{m}\int_{t_0}^{t}g_{G^+}^{kj}(t-r)^\ast b_{j}(r)dr\sum_{j=1}^{m}\int_{t_0}^{t}g_{G^+}^{lj}(t-r)b_{j}^{\ast}(r)dr \Psi_\nu  \right\rangle  \label{eq:interm1d} \\
& =\sum_{j=1}^{m}\int_{t_0}^{t}g_{G^+}^{kj}(t-r)^\ast g_{G^+}^{lj}(t-r)dr+\sum_{j=1}^{m}\int_{t_0}^t g_{G^+}^{kj}(t-r)^\ast \nu_j(r)dr \int_{t_0}^t g_{G^+}^{lj}(t-r) \nu_j(r)^\ast dr. \nonumber
\end{align}
Consequently, Sending $t_0 \to -\infty$ gives
\begin{align}
& \left\langle  \Psi_\nu  , \int_{t_0}^t g_{G^{+}}(t-\tau)^{\#}b(\tau)d\tau\left(\int_{t_0}^t g_{G^{+}} (t-\tau) b^{\#}(\tau)d\tau\right)^{T}  \Psi_\nu\right\rangle  \label{eq:interm3}
\\
=&\int_0^\infty g_{G^{+}}(r)^{\#}g_{G^{+}}(r)^{T}dr+\xi_{out}^{+}(t)^{\#}\xi_{out}^{+}(t)^{T}.  \nonumber
\end{align}
Thirdly, note that%
\begin{align}
& \left\langle\Psi_\nu  , \int_{t_0}^t g_{G^{-}}(t-\tau)^{\#} b^{\#}(\tau)d\tau \left(  \int_{t_0}^t g_{G^{-}}(t-\tau) b(\tau)d\tau \right)^{T} \Psi_\nu\right\rangle  \\
& =\left[
\begin{array}{c}
\sum_{j=1}^{m}\int_{t_0}^{t}g_{G^-}^{1j}(t-\iota)^\ast b_{j}^{\ast}(\iota)d%
\iota \\
\vdots \\
\sum_{j=1}^{m}\int_{t_0}^{t}g_{G^-}^{mj}(t-\iota)^{\ast}b_{j}(\iota)d%
\iota%
\end{array}
\right] \left[
\begin{array}{ccc}
\sum_{j=1}^{m}\int_{t_0}^{t}g_{G^-}^{1j}(t-\iota)b_{j}(\iota)d\iota & \cdots
& \sum_{j=1}^{m}\int_{t_0}^{t}g_{G^-}^{mj}(t-\iota)b_{j}(\iota)d\iota%
\end{array}
\right]. \nonumber
\end{align}
For given $k$ and $l$ ($k,l=1,\ldots,m$), we have
\begin{align}
&\left\langle  \Psi_\nu  ,\left(\sum_{j=1}^{m}\int_{t_0}^{t}g_{G^-}^{kj}(t-r)^\ast b_{j}^{\ast}(r)dr\right) \left( \sum_{j=1}^{m}\int_{t_0}^{t}g_{G^-}^{lj}(t-r)b_{j}(r)dr\right) \Psi_\nu\right\rangle \nonumber
\\
=& \sum_{j=1}^{m}\int_{t_0}^t g_{G^-}^{kj}(t-r)^\ast \nu_j(r)^\ast dr \int_{t_0}^t g_{G^-}^{lj}(t-r) \nu_j(r) dr.  \label{eq:interm1e}
\end{align}
As a result, sending $t_0 \to -\infty$ gives
\begin{equation}
\left\langle\Psi_\nu  , \int_{t_0}^t g_{G^{-}}(t-\tau)^{\#} b^{\#}(\tau)d\tau \left(  \int_{t_0}^t g_{G^{-}}(t-\tau) b(\tau)d\tau \right)^{T} \Psi_\nu\right\rangle
=\xi_{out}^{-}(t)^{\#}\xi_{out}^{-}(t)^{T}.  \label{eq:interm4}
\end{equation}
Finally, substituting Eqs. (\ref{eq:interm1}), (\ref{eq:interm1b}), (\ref{eq:interm3}), and (\ref{eq:interm4}) into Eq. (\ref{eq:nout7}) yields Eq. (\ref{eq:out}). Eq. (\ref{eq:tr1}) follows immediately from Eq. (\ref{eq:out}). This completes the proof. \hfill $\blacksquare$

\emph{Proof of Theorem \ref{thm:main}:~} First we prove that the
steady-state output states are indeed in the form of Eq. (\ref{class_F}). In
analog to Eq. (\ref{rho_infty}), the steady-state joint system and output field state is
\begin{eqnarray}
\rho_\infty &=&\lim_{t\rightarrow \infty ,t_{0}\rightarrow -\infty}U(t,t_{0})\left\vert \phi \right\rangle \left\langle\phi\right\vert \otimes \rho_{\xi _{in},R_{in}}U(t,t_{0})^\ast  \label{eq:temp_1} \\
&=&\left(I_{sys}\otimes\prod_{k=1}^{m}\sum_{j=1}^{m}\int_{-\infty }^{\infty }\left\{
 \left( \xi _{in}^{-,jk}(r)b_{j}^{-\ast }(r,-\infty )dr -\xi _{in}^{+,jk}(r)^{\ast }b_{j}^{-}(r,-\infty
)\right) dr\right\}\right)\rho_{\infty,g}  \notag \\
&&\times \left(I_{sys}\otimes \prod_{k=1}^{m}\sum_{j=1}^{m}\int_{-\infty }^{\infty
}\left\{\left( \xi _{in}^{-,jk}(r)b_{j}^{-\ast }(r,-\infty
)dr -\xi _{in}^{+,jk}(r)^{\ast
}b_{j}^{-}(r,-\infty )\right) dr\right\} \right) ^{\ast }  \notag
\end{eqnarray}
where
\begin{equation}\label{eq:rho_infty_g}
 \rho_{\infty,g} =  \lim_{t\rightarrow \infty ,t_{0}\rightarrow -\infty}U(t,t_{0})\left\vert \phi \right\rangle \left\langle \phi \right\vert\otimes \rho _{R_{in}}U(t,t_{0})^\ast,
\end{equation}
and for $j=1,\ldots ,m$,
\begin{equation}
b_{j}^{-}(t,-\infty )=\lim_{t_{0}\rightarrow -\infty
}U(t,t_{0})b_{j}(t)U^{\ast }(t,t_{0}),~b_{j}^{-\ast }(t,-\infty
)=\lim_{t_{0}\rightarrow -\infty }U(t,t_{0})b_{j}^{\ast }(t)U^{\ast
}(t,t_{0}).  \label{eq:bj-bj-ast}
\end{equation}
According to Eq. (\ref{eq:out_tf}), for any $t\in \mathbb{R}$,
\begin{equation}
\breve{b}(t)=\int_{t_{0}}^{\infty }g_{G}(t-r)U(r,t_{0})\breve{b}%
(r)U(r,t_{0})^{\ast }dr+e^{A(t-t_{0})}U(t,t_{0})\breve{a}U(t,t_{0})^{\ast }.
\label{eq:bt}
\end{equation}%
Letting $t_{0}\rightarrow -\infty $ and substituting Eq. (\ref{eq:bj-bj-ast}%
) into Eq. (\ref{eq:bt}) yield $\breve{b}(t)=\int_{-\infty }^{\infty
}g_{G}(t-r)\breve{b}^{-}(r,-\infty )dr$. As a result
\begin{equation}
\breve{b}^{-}(t,-\infty )=\int_{-\infty }^{\infty }g_{G^{-1}}(t-r)\breve{b}%
(r)dr.  \label{eq:bUbG}
\end{equation}%
Partition $g_{G^{-}}$ and $g_{G^{+}}$ as
\begin{equation}
g_{G^{-}}(t)=\left[
\begin{array}{ccc}
g_{G^{-}}^{1}(t) & \cdots  & g_{G^{-}}^{m}(t)%
\end{array}%
\right] ,\ \ g_{G^{+}}(t)=\left[
\begin{array}{ccc}
g_{G^{+}}^{1}(t) & \cdots  & g_{G^{+}}^{m}(t)%
\end{array}%
\right] .
\end{equation}%
According to Lemma \ref{lem:G_inv}, for $j=1,\ldots ,m$,
\begin{equation}
b_{j}^{-\ast }(r,-\infty )=-\int_{-\infty }^{\infty }g_{G^{+}}^{j}(\iota
-r)^{\dag }b(\iota )d\iota +\int_{-\infty }^{\infty }g_{G^{-}}^{j}(\iota
-r)^{T}b^{\#}(\iota )d\iota . \label{eq:G-inv2}
\end{equation}%
In a similar way,
\begin{equation}
b_{j}^{-}(r,-\infty )=\int_{-\infty }^{\infty }g_{G^{-}}^{j}(\iota -r)^{\dag
}b(\iota )d\iota -\int_{-\infty }^{\infty }g_{G^{+}}^{j}(\iota
-r)^{T}b^{\#}(\iota )d\iota .  \label{eq:G-inv3}
\end{equation}
By Eqs. (\ref{eq:G-inv2}) and (\ref{eq:G-inv3}), for each $j,k=1,\ldots,m$,
\begin{eqnarray}
&&\int_{-\infty }^{\infty }\xi _{in}^{-,jk}(r)b_{j}^{-\ast }(r,-\infty
)dr|\phi \Phi _{R_{out}}\rangle -\xi _{in}^{+,jk}(r)^{\ast
}b_{j}^{-}(r,-\infty )dr  \label{Eq:temp} \\
&=&-\int_{-\infty }^{\infty }\xi _{in}^{-,jk}(r)\int_{-\infty }^{\infty
}g_{G^{+}}^{j}(\iota -r)^{\dag }b(\iota )d\iota dr+\int_{-\infty }^{\infty
}\xi _{in}^{-,jk}(r)\int_{-\infty }^{\infty }g_{G^{-}}^{j}(\iota
-r)^{T}b^{\#}(\iota )d\iota dr  \notag \\
&&-\sum_{j=1}^{m}\int_{-\infty }^{\infty }\xi _{in}^{+,jk}(r)^{\ast
}\int_{-\infty }^{\infty }g_{G^{-}}^{j}(\iota -r)^{\dag }b(\iota )d\iota
dr+\int_{-\infty }^{\infty }\xi _{in}^{+,jk}(r)^{\ast }\int_{-\infty
}^{\infty }g_{G^{+}}^{j}(\iota -r)^{T}b^{\#}(\iota )d\iota dr  \notag \\
&=&B_{j}^{\ast }(\xi _{out}^{-,jk})-B_{j}(\xi _{out}^{+,jk}).  \notag
\end{eqnarray}%
Substituting Eq. (\ref{Eq:temp}) into Eq.  (\ref{eq:temp_1}) gives
\begin{equation*}
\rho_\infty = \left(I_{sys}\otimes\prod_{k=1}^{m}\sum_{j=1}^{m} (B_{j}^{\ast }(\xi
_{out}^{-,jk})-B_{j}(\xi _{out}^{+,jk}))\right)\rho_{\infty,g}\left(I_{sys}\otimes
\prod_{k=1}^{m}\sum_{j=1}^{m} (B_{j}^{\ast }(\xi
_{out}^{-,jk})-B_{j}(\xi _{out}^{+,jk}))\right) ^{\ast }.
\end{equation*}%
Thus%
\begin{equation}
\rho_{\xi_{out},R_{out}}=\mathrm{Tr}_{sys}\left[\rho_\infty\right]
=\prod_{k=1}^{m}\sum_{j=1}^{m}(B_{j}^{\ast }(\xi _{out}^{-,jk})-B_{j}(\xi
_{out}^{+,jk}))\rho _{R_{out}}\left(
\prod_{k=1}^{m}\sum_{j=1}^{m}(B_{j}^{\ast }(\xi _{out}^{-,jk})-B_{j}(\xi
_{out}^{+,jk}))\right) ^{\ast },  \label{Eq:temp_2}
\end{equation}%
where $\rho _{R_{out}}=\mathrm{Tr}_{sys}[\rho_{\infty,g}]$, whose covariance function $R_{out}$ is given in Eq. (\ref{eq:R_out_gnr}). Consequently, $\rho _{\xi _{out},R_{out}}$ is exactly in the form of Eq. (%
\ref{class_F}).

Next we prove that $\rho _{\xi _{out},R_{out}}$ is normalized. Noticing that
for a complex-valued function $\eta (t)\in \mathbb{C}$,
\begin{equation}
\int_{-\infty }^{\infty }\eta (t)b^{\ast }(t)dt=\int_{-\infty }^{\infty
}\eta \lbrack -i\omega ]b^{\ast }[i\omega ]d\omega ,\ \ \int_{-\infty
}^{\infty }\eta (t)^{\ast }b(t)dt=\int_{-\infty }^{\infty }\eta \lbrack
i\omega ]^{\ast }b[i\omega ]d\omega .  \label{eq:xib}
\end{equation}%
It is easy to show that the frequency counterpart of $\rho _{\xi
_{in},R_{in}}$ in Eq. (\ref{eq:psi_in_7}) is
\begin{eqnarray*}
\rho _{\xi _{in},R_{in}}&=&\underbrace{\int_{-\infty }^{\infty }\cdots
\int_{-\infty }^{\infty }}_{m}M_{\xi }^{-}(\omega _{1\rightarrow
m})^{T}\Theta ^{\otimes _{c}^{m}}M_{\breve{b}}(\omega _{1\rightarrow
m})d\omega _{1\rightarrow m}\rho _{R_{in}} \\
&& \ \ \times \underbrace{\int_{-\infty
}^{\infty }\cdots \int_{-\infty }^{\infty }}_{m}M_{\xi }^{+}(\omega
_{1\rightarrow m})^{\dag }J^{\otimes _{c}^{m}}M_{\breve{b}}(\omega
_{1\rightarrow m})d\omega _{1\rightarrow m},
\end{eqnarray*}%
where the shorthand $d\omega _{1\rightarrow j}$ is used to denote $d\omega
_{1}\cdots d\omega _{j}$ for an arbitrary positive integer $j$.
Consequently, similar to Eq. (\ref{eq:psi_in_7}), the normalization
condition (\ref{eq:innerproduct}) is equivalent to
\begin{equation}
\underbrace{\int_{-\infty }^{\infty }\cdots \int_{-\infty }^{\infty }}%
_{2m}(M_{\xi }^{+}(\omega _{1\rightarrow m})^{\#}\otimes _{c}M_{\xi
}^{-}(\omega _{m+1\rightarrow 2m}))^{T}J^{\otimes _{c}^{m}}\otimes
_{c}\Theta ^{\otimes _{c}^{m}}\mathrm{Tr}[\rho _{R_{in}}M_{\breve{b}}(\omega
_{1\rightarrow 2m})]d\omega _{1\rightarrow 2m}=1.  \label{eq:innerproduct_b}
\end{equation}%
As a result, it suffices to show that $\rho _{\xi _{out},R_{out}}$ has the
normalization condition (\ref{eq:innerproduct_b}). Firstly, by Eq. (\ref%
{eq:xi_out}), we have
\begin{equation*}
\xi _{out}[i\omega ]=\left[
\begin{array}{cc}
\xi _{out}^{-}[i\omega ] & \xi _{out}^{+}[i\omega ] \\
\xi _{out}^{+}[-i\omega ]^{\#} & \xi _{out}^{-}[-i\omega ]^{\#}%
\end{array}%
\right] =\Xi _{G}[i\omega ]\xi \lbrack i\omega ].
\end{equation*}%
Partition $\xi _{out}$ to be $\xi _{out}=[\xi _{out}^{1}\ \cdots \ \xi
_{out}^{2m}]$. Then by Eq. (\ref{Eq:temp_2}),
\begin{eqnarray}
\mathrm{Tr}[\rho _{\xi _{out},R_{out}}] &=&\underbrace{\int_{-\infty }^{\infty
}\cdots \int_{-\infty }^{\infty }}_{2m}(M_{\xi _{out}}^{+}(\omega
_{1\rightarrow m})^{\#}\otimes _{c}M_{\xi _{out}}^{-}(\omega
_{m+1\rightarrow 2m}))^{T}  \label{eq:innerproduct_2} \\
&&\times J^{\otimes _{c}^{m}}\otimes _{c}\Theta ^{\otimes _{c}^{m}}\mathrm{Tr}[\rho
_{R_{out}}M_{\breve{b}}(\omega _{1\rightarrow 2m})]d\omega _{1\rightarrow
2m}.  \notag
\end{eqnarray}%
Secondly, noticing that $\xi _{out}^{k}[i\omega ]=\Xi _{G}[i\omega ]\xi
_{in}^{k}[i\omega ]$, Eq. (\ref{eq:innerproduct_2}) becomes
\begin{align}
\mathrm{Tr}[\rho _{\xi _{out},R_{out}}]& =\underbrace{\int_{-\infty }^{\infty }\cdots
\int_{-\infty }^{\infty }}_{2m}(M_{\xi _{in}}^{+}(\omega _{1\rightarrow
m})^{\#}\otimes _{c}M_{\xi _{in}}^{-}(\omega _{m+1\rightarrow 2m}))^{T}
\label{eq:innerproduct_3} \\
& \times (\Xi _{G}[i\omega ]^{\dag }J)^{\otimes _{c}^{m}}\otimes _{c}(\Xi
_{G}[-i\omega ]^{T}\Theta )^{\otimes _{c}^{m}}\mathrm{Tr}[\rho _{R_{out}}M_{\breve{b}%
}(\omega _{1\rightarrow 2m})]d\omega _{1\rightarrow 2m},  \notag
\end{align}%
where slight abuse of notation is used, that is,
\begin{equation}
(\Xi _{G}[i\omega ]^{\dag }J)^{\otimes _{c}^{m}}:=(\Xi _{G}[i\omega
_{1}]^{\dag }J)\otimes _{c}\cdots \otimes _{c}(\Xi _{G}[i\omega _{m}]^{\dag
}J),
\end{equation}%
and
\begin{equation}
(\Xi _{G}[-i\omega ]^{T}\Theta )^{\otimes _{c}^{m}}:=(\Xi _{G}[-i\omega
_{m+1}]^{T}\Theta )\otimes _{c}\cdots \otimes _{c}(\Xi _{G}[-i\omega
_{2m}]^{T}\Theta ).
\end{equation}%
Thirdly, denote
\begin{equation}
\Xi _{G}[i\omega ]^{\otimes _{c}^{2m}}:=\Xi _{G}[i\omega _{1}]\otimes
_{c}\cdots \otimes _{c}\Xi _{G}[i\omega _{2m}].
\end{equation}%
Noticing
\begin{equation}
\mathrm{Tr}[\rho _{R_{out}}M_{\breve{b}}(\omega _{1\rightarrow 2m})]=\Xi _{G}[i\omega
]^{\otimes _{c}^{2m}}\mathrm{Tr}[\rho _{R_{in}}M_{\breve{b}}(\omega _{1\rightarrow
2m})],
\end{equation}%
Eq. (\ref{eq:innerproduct_3}) becomes
\begin{align}
& \mathrm{Tr}[\rho _{\xi _{out},R_{out}}]  \label{eq:innerproduct_4} \\
=& \underbrace{\int_{-\infty }^{\infty }\cdots \int_{-\infty }^{\infty }}%
_{2m}(M_{\xi _{in}}^{+}(\omega _{1\rightarrow m})^{\#}\otimes _{c}M_{\xi
_{in}}^{-}(\omega _{m+1\rightarrow 2m}))^{T}  \notag \\
& \times (\Xi _{G}[i\omega ]^{\dag }J\Xi _{G}[i\omega ])^{\otimes
_{c}^{m}}\otimes _{c}(\Xi _{G}[-i\omega ]^{T}\Theta \Xi _{G}[i\omega
])^{\otimes _{c}^{m}}\mathrm{Tr}[\rho _{R_{in}}M_{\breve{b}}(\omega _{1\rightarrow
2m})]d\omega _{1\rightarrow 2m},  \notag
\end{align}%
where slight abuse of notation is used, that is,
\begin{equation}
(\Xi _{G}[i\omega ]^{\dag }J\Xi _{G}[i\omega ])^{\otimes _{c}^{m}}:=(\Xi
_{G}[i\omega _{1}]^{\dag }JG[i\omega _{1}])\otimes _{c}\cdots \otimes
_{c}(\Xi _{G}[i\omega _{m}]^{\dag }J\Xi _{G}[i\omega _{m}]),
\end{equation}%
and
\begin{equation}
(\Xi _{G}[-i\omega ]^{T}\Theta \Xi _{G}[i\omega ])^{\otimes _{c}^{m}}:=(\Xi
_{G}[-i\omega _{m+1}]^{T}\Theta \Xi _{G}[i\omega _{m+1}])\otimes _{c}\cdots
\otimes _{c}(\Xi _{G}[-i\omega _{2m}]^{T}\Theta \Xi _{G}[i\omega _{2m}].
\end{equation}%
Finally, Eq. (\ref{eq:innerproduct_4}), together with the relations
\begin{equation}
\Xi _{G}[i\omega _{k}]^{\dag }J\Xi _{G}[i\omega _{k}]=J,\ \ \Xi
_{G}[-i\omega _{k}]^{T}\Theta \Xi _{G}[i\omega _{k}]=\Theta ,\ \ (k=1,\ldots
,2m)
\end{equation}%
establishes Eq.(\ref{eq:innerproduct_b}). The proof is completed. \hfill $%
\blacksquare $


\end{document}